\newcommand{\nada}[1]{}
\newcommand\ve{\varepsilon}
\DeclareMathAlphabet\scr{U}{scr}{m}{n}
\SetMathAlphabet\scr{bold}{U}{scr}{b}{n}
  \DeclareFontFamily{U}{scr}{\skewchar\font'177}%
  \DeclareFontShape{U}{scr}{m}{n}{<-6>rsfs5<6-8>rsfs7<8->rsfs10}{}%
  \DeclareFontShape{U}{scr}{b}{n}{<-6>rsfs5<6-8>rsfs7<8->rsfs10}{}%
\newtheorem{theorem}{Theorem}[section]
\newtheorem{definition}[theorem]{Definition}
\newtheorem{lemma}[theorem]{Lemma}
\newtheorem{proposition}[theorem]{Proposition}
\theoremstyle{definition}
\newcommand{\tilmu}{{\tilde\mu}}
\newcommand{\tilsigma}{{\tilde\sigma}}
\newcommand{\esp}[2][E]{#1\left[#2\right]}
\newcommand{\cale}{\mathcal E}
\newcommand{\eps}{\varepsilon}
\numberwithin{equation}{section}
\begin{document}

\title{Portfolio Choice with Transaction Costs: a User's Guide
\footnote{The authors are grateful to Ren Liu for proofreading the manuscript.}
}
\author{Paolo Guasoni\thanks{Boston University, Department of Mathematics and Statistics, 111 Cummington Street Boston, MA 02215, USA.
Dublin City University, School of Mathematical Sciences, Glasnevin, Dublin 9, Ireland, email: \texttt{guasoni@bu.edu}. Partially supported by the ERC (278295), NSF (DMS-0807994, DMS-1109047), SFI (07/MI/008, 07/SK/M1189, 08/SRC/FMC1389), and FP7 (RG-248896).}
\and
Johannes Muhle-Karbe\thanks{ETH Z\"urich, Departement Mathematik, R\"amistrasse 101, CH-8092, Z\"urich, Switzerland, and Swiss Finance Institute, email:
\texttt{johannes.muhle-karbe@math.ethz.ch}. Partially supported by the National Centre of Competence in Research ``Financial Valuation and Risk Management'' (NCCR FINRISK), Project D1 (Mathematical Methods in Financial Risk Management), of the Swiss National Science Foundation (SNF).}}

\maketitle

\begin{abstract}
Recent progress in portfolio choice has made a wide class of problems involving transaction costs tractable. We review the basic approach to these problems, and outline some directions for future research.
\end{abstract}

\bigskip
\noindent\textbf{Mathematics Subject Classification: (2010)} 91G10, 91G80.

\bigskip
\noindent\textbf{JEL Classification:} G11, G12.

\bigskip
\noindent\textbf{Keywords:} transaction costs, long-run, portfolio choice, Merton problem.

\newpage
\section{Introduction}

Transaction costs, originally considered one of many imperfections that are best neglected, have now become a very active and fast-growing theme in Mathematical Finance.\footnote{The Mathscinet database shows only nine publications with  ``transaction costs'' in their title in the eighties (1980-1989). This figure rises to 52 in the nineties (1990-1999), and to 278 in the naughties (2000-2009).} From the outset, such a growth may seem puzzling, since over the same period transaction costs  have dramatically declined across financial markets, as stock exchanges have been fully automated, and paper trades replaced by electronic settlements. 
In fact, the interest for transaction costs reflects both the increased attention to robustness of financial models, and the growing role of high-frequency trading. The decline of bid-ask spreads has sparked a huge increase in trading volume, and high-volume strategies require a careful understanding of the effects of frictions on their returns.

At the same time, transaction costs help understand trading volume itself. In frictionless models, investors continuously rebalance their portfolios, as to hold a constant mix of assets over time. Since trading volume is proportional to the total variation of a portfolio, and prices follow diffusions that have infinite variation, such models lead to the absurd conclusion that trading volume is infinite over any time interval. With transaction costs, even small trading costs make it optimal for investors to trade infrequently, allowing wide oscillations in their portfolios. 

This paper reviews a recent approach, which has made portfolio choice with transaction costs more tractable, and which appears to be applicable in more complex settings. This approach is not based on any new revolutionary concept, but it rather tries to combine several ideas that were previously used in isolation. Thus, we present a new toolbox that contains several used tools.
In a nutshell, we argue that a natural approach to portfolio choice problems with transaction costs entails four steps: (i) heuristic control arguments to identify the long-run value function, (ii) construction of a candidate shadow price using marginal rates of substitution, (iii) verification and finite-horizon bounds using the myopic probability, and (iv) asymptotic results from the implicit function theorem.

The advantages of this approach are threefold. First, it combines the dimension-reduction and higher tractability of the long-horizon problem with exact finite-horizon bounds, which keep a firm grip on the robustness of the solution. Second, we show that the free-boundaries arising with transaction costs can sometimes be identified explicitly in terms of a single parameter, the \emph{equivalent safe rate}, which remains the only non-explicit part of the solution. This reduction is useful both for theoretical and for practical purposes, as it helps to simplify proofs as well as calculations. Third, this approach leads to the simultaneous computation of several related quantities, such as welfare, portfolios, liquidity premia, and trading volume. 

The paper proceeds as follows: in the next section, we present a brief timeline of related research, which is far from exhaustive, and only aims at putting the paper in context. The following section introduces the main problem, discussing the relative advantages of the three main models with terminal wealth, consumption, and long-horizon. This section also discusses the typical heuristic arguments of stochastic control that lead to an educated guess for the value function, and the identification of the corresponding free boundaries. For the long-run problem, the following section shows the passage from the heuristic calculations to a verification, which relies on two central ideas. The first one is the construction of a \emph{shadow} market, an imaginary frictionless market, built to deliver the same optimal strategy as the original market with transaction cost. This shadow market harnesses transaction costs by hiding them inside a more complex model, without transaction costs, but in which investment opportunities are driven by a state variable that represents the portfolio composition of the investor. This insight -- that transaction costs are essentially equivalent to state-dependent investment opportunities -- in turn allows to exploit the approach to verification based on the change of measure to the myopic probability.

We conclude with a deliberately speculative section on three open problems: multiple assets, return predictability, and option spreads. We argue that with transaction costs, multivariate models present both a substantial technical challenge, and a potentially fertile ground for novel financial insights, which may alter the conventional wisdom on fund separation. Likewise, transaction costs may help reconcile statistical evidence on return predictability with the poor out-of-sample performance of market-timing strategies. Finally, the large bid-ask spreads observed in options on highly liquid assets still lack a theoretical basis, and transaction costs are a natural avenue to search for an explanation.

\section{Literature Review}

Portfolio choice with transaction cost starts with the seminal papers of \citet*{MR0469196}, \citet*{constantinides.86}, and \cite{dumas.luciano.91}, in the wake of the frictionless results of Merton \citeyearpar{merton.69,merton.71}. From heuristic arguments,  these early studies gleaned central insights that held up to subsequent formal proofs. First, optimal portfolios entail a no-trade region, in which it is optimal to keep existing holdings in all assets. Optimal portfolios always remain within this region, and hence trading should merely take place at its boundaries. The no-trade region is wide, even for small transaction costs, implying that investors should accept wide fluctuations around the frictionless target.  Second, the large no-trade region has a small welfare impact \citep*{constantinides.86}, because the displacement loss is small near the frictionless optimum, and the wide no-trade region minimizes the effect of transaction costs.

On the mathematical side, \citet*{MR942619} reduce the maximization of logarithmic utility from terminal wealth at a long horizon to the solution of a nonlinear second-order ODE with free boundaries, to be determined numerically. \citet*{MR1080472} accomplish this feat for power utility from consumption with infinite horizon.  \citet*{MR1284980} extend their analysis with viscosity techniques, removing some parametric restrictions.  \citet*{MR1284980} and \citet*{MR2076549} study the size of the no-trade interval and the utility loss due to transaction costs. They argue that these are of order $O(\eps^{1/3})$ and $O(\eps^{2/3})$, respectively, where $\eps$ is the proportional cost, in line with the numerical results of \cite{constantinides.86} alluded to above.  Building on earlier heuristic results of \cite{whalley.wilmott.97}, \citet*{MR2048827} explicitly determine the coefficients of the leading-order corrections around the frictionless case $\eps=0$. In a general Markovian setting and for arbitrary utility functions, \cite{soner.touzi.12} characterize the corresponding quantities in terms of an ergodic control problem. All these papers employ stochastic control as their main tool.

A new strand of literature, which finds its roots in the seminal work of \citet*{MR1338025} and \citet*{MR1384221}, seeks to bring martingale methods, now well-understood in frictionless markets, to bear on transaction costs. 
This idea has already shown its promise in the context of superreplication: \citet*{MR2398764} prove the face-lifting theorem of \citet*{MR1336872} for general continuous processes, using an argument based on shadow prices.
\citet*{MR2676941} explore this approach for optimal consumption from logarithmic utility, showing how shadow prices simplify verification theorems. \citet*{gerhold2010dual} exploit this idea to obtain the expansions of \citet*{MR2048827} for logarithmic utility, but with an arbitrary number of terms. The present study reviews a streamlined version of the approach put forward by \cite{gerhold.al.11}, who prove a verification theorem and derive full asymptotics for the optimal policy, welfare, and implied trading volume in the long-run model of \cite{dumas.luciano.91}. The duality-based verification is based on applying the frictionless long-run machinery of \cite{guasoni.robertson.11} to a fictitious shadow price, traded without transaction costs. Compared to \cite{gerhold.al.11}, finding a candidate shadow price is greatly simplified by applying a observation originally made by \cite{loewenstein.00}: Given a smooth candidate value function, it can simply be obtained via the marginal rate of substitution of risky for safe assets for the frictional investor. (Also cf.\ \cite{herzegh.prokaj.11,muhlekarbe.liu.12} for applications of this idea to related problems.)

\section{The Basic Model}

\subsection{Objectives}
Let $X^\pi_t$ denote the wealth of an investor who follows the portfolio $\pi_t$, and let $c_t$ his consumption rate, both at time $t$. The three typical objectives for portfolio choice with power utility $U(x)=x^{1-\gamma}/(1-\gamma)$ are:
\begin{align}
\label{eq:termwelath}
\max_\pi\ &\esp{\frac{(X^\pi_T)^{1-\gamma}}{1-\gamma}},
&\qquad\text{(terminal wealth)}
\\
\label{eq:consumption}
\max_{\pi,c}\ &\esp{\int_0^\infty e^{-\delta t} \frac{c_t^{1-\gamma}}{1-\gamma} dt},
&\qquad\text{(consumption)}
\\
\label{eq:longrun}
\max_\pi &\liminf_{T\rightarrow\infty} \frac 1T\log \esp{{(X^\pi_T)^{1-\gamma}}}^{\frac1{1-\gamma}}.
&\qquad\text{(long run)}
\end{align}

Expected utility from terminal wealth \eqref{eq:termwelath} has attracted the attention of most of the semimartingale literature (see, for example, \citet{kramkov1999asymptotic} and the references therein). This objective is the simplest for abstract questions, such as existence, uniqueness, well-posedness, and stability, which are by now largely understood. It is also relevant for problems such as retirement planning, which entail a known horizon and no intermediate consumption.
Expected utility from intertemporal consumption \eqref{eq:consumption} is more appealing for applications to macroeconomics, because it yields an endogenous consumption process $c_t$, and therefore has testable implications for consumption data. 

The long run objective is probably the least intuitive, in view of the limit in  \eqref{eq:longrun}. To understand its economic interpretation, note first that, for a fixed horizon $T$, the quantity $\esp{{(X^\pi_T)^{1-\gamma}}}^{\frac1{1-\gamma}}$ coincides with the \emph{certainty equivalent} $U^{-1}(\esp{U(X^\pi_T)})$ of the payoff $X^\pi_T$. If we match this certainty equivalent with $x e^{\rho_T T}$, that is, the investor's initial capital $x$ compounded at some constant rate $\rho_T$ for the same horizon $T$, we recognize that:
$$
\rho_T = \frac 1T\log \esp{{(X^\pi_T)^{1-\gamma}}}^{\frac1{1-\gamma}}.
$$
Thus, the limit in \eqref{eq:longrun} has the interpretation of an \emph{equivalent safe rate}, that is, the hypothetical safe rate that would make the investor indifferent between investing optimally in the market, and leaving all wealth invested at this hypothetical rate.

Both the consumption and long-run problems are \emph{stationary} objectives, in that they lead to time-independent solutions (as long as investment opportunities are also stationary). Of course, the advantage of stationary problems is that the resulting optimization problems have one less dimension than similar nonstationary problems, such as utility maximization from terminal wealth. Both objectives model an investor with an infinite horizon, but with some important differences. First, the consumption objective involves the additional  time-preference rate $\delta$, which does not appear in the long-run objective. Second, in typical models (even in a Black-Scholes market, compare \cite*{choi.al.12}), the consumption objective may not be well posed if risk aversion $\gamma$ is less than the logarithmic value of one, and investment opportunities are sufficiently attractive. By contrast, the long-run objective is typically well-posed under more general conditions.

The irony of portfolio choice is that its most natural objectives are also the least tractable: the terminal-wealth problem admits closed-form solutions only in rare cases (cf., e.g., \cite{liu.07}). Even when such solutions exist, they are often too clumsy to yield clear insights on the role of preference and market parameters. Unfortunately, the consumption objective admits explicit solutions primarily in complete markets, or with investment opportunities independent of asset prices, a fact that severely limits our understanding of the effects of partial return predictability on consumption.

The good news is that the long-run problem admits explicit solutions in many situations in which the other two problems do not, its optimal portfolio is almost optimal even for the other objectives, and bounds on the resulting utility loss are available. This general insight is crucial in markets with frictions, such as transaction costs.

\subsection{Control Heuristics}

We now examine the differences in the Hamilton-Jacobi-Bellman equations arising from the three objectives \eqref{eq:termwelath}, \eqref{eq:consumption}, and \eqref{eq:longrun}, in the basic model with one safe asset growing at the riskless rate $r \geq 0$, and a risky asset with ask (buying) price $S_t$ following geometric Brownian Motion:
\begin{equation}
\frac{dS_t}{S_t} = (\mu+r) dt + \sigma dW_t, \quad \mu, \sigma>0.
\end{equation}
The bid (selling) price is  $(1-\varepsilon)S_t$, where $\varepsilon \in (0,1)$ is the relative bid-ask spread.

Denote the number of units of the safe asset by $\varphi^0_t$ and write the number of units of the risky asset $\varphi_t=\varphi_t^{\uparrow}-\varphi_t^\downarrow$ as the difference between cumulative purchases and sales. 
The values of the safe position $X_t$ and of the risky position $Y_t$ (quoted at the ask price) evolve as:
\begin{align}
\label{eq:selffin_risky}
dX_t =& r X_t dt-S_t d\varphi^{\uparrow}_t+(1-\ve)S_t d\varphi^{\downarrow}_t,\\
\label{eq:selffin_safe}
dY_t =& (\mu+r) Y_t dt +\sigma Y_t dW_t +S_td\varphi^{\uparrow}_t-S_td\varphi^{\downarrow}_t.
\end{align}
The second equation prescribes that risky wealth earns the return on the risky asset, plus units purchased, and minus units sold. In the first equation the safe position earns the safe rate, minus the units used for purchases (at the ask price $S_t$), and plus the units used for sales (at the bid price $(1-\varepsilon)S_t$).

For the maximization of utility from terminal wealth, denote the value function as $V(t,x,y)$, which depends on time $t$, on the safe position $x$, and on the risky position $y$.
It\^o's formula yields:
\begin{align}
d V(t,X_t,Y_t)=& V_t dt+V_x dX_t + V_y dY_t +\frac 12  V_{yy} d\langle Y,Y\rangle_t\\
=& 
\label{eq:itoder}
\left(V_t+ r X_t V_x+(\mu+r)Y_t V_y+\frac{\sigma^2}2 Y_t^2 V_{yy}\right)dt\\
&+ S_t(V_y-V_x)d\varphi^{\uparrow}_t+S_t((1-\ve)V_x-V_y)d\varphi^{\downarrow}_t+\sigma X_t V_y dW_t,
\end{align}
By the martingale optimality principle of stochastic control, the value function $V(t,X_t,Y_t)$ must be a supermartingale for any choice of purchases and sales $\varphi_t^{\uparrow},\varphi_t^{\downarrow}$. Since these are increasing processes,  this implies $V_y-V_x \le 0$ and $(1-\ve) V_x-V_y \le 0$, which means that
\begin{equation}
1 \le \frac{V_x}{V_y}\le \frac{1}{1-\ve}.
\end{equation}
In the interior of this ``no-trade region'', where the number $\varphi_t=\varphi_t^{\uparrow}-\varphi_t^{\downarrow}$ of risky shares remains constant, the drift of $V(t,X_t,Y_t)$ cannot be positive, and must become zero for the optimal policy. This leads to the HJB equation: 
\begin{equation}\label{eq:hjb}
 V_t+ r X_t V_x+(\mu+r)Y_t V_y+\frac{\sigma^2}2 Y_t^2 V_{yy} =0
 \qquad \text{if } \qquad 1< \frac{V_x}{V_y}<\frac{1}{1-\ve}.
\end{equation}
Next, the value function is homogeneous in wealth, i.e. $V(t,X_t,Y_t)=(X_t)^{1-\gamma}v(t,Y_t/X_t)$, whence setting $z=y/x$: 
\begin{equation}\label{eq:hjbred}
\frac{\sigma^2}2 z^2 v_{zz}+\mu z v_z+r(1-\gamma)v+v_t=0
\qquad \text{if } \qquad 1+z<\frac{ (1-\gamma) v(t,z)}{v_z(t,z)}<\frac{1}{1-\ve}+z.
\end{equation}
Now, suppose that the no-trade region $\{(t,z):1+z\leq \frac{ (1-\gamma) v(t,z)}{v_z(t,z)}\leq\frac{1}{1-\ve}+z\}$ coincides with some interval $l(t)\le z\le u(t)$ to be found. At $l(t)$ the left inequality in \eqref{eq:hjbred} holds as equality, while at $u(t)$ the right inequality holds as equality, leading to the boundary conditions:
\begin{align}
\label{eq:boundbuy}
(1+l)v_z(t,l)-(1-\gamma)v(t,l)&=0,\\
\label{eq:boundsell}
(1/(1-\ve)+u)v_z(t,u)-(1-\gamma)v(t,u)&=0.
\end{align}
These conditions are not sufficient to identify the solution to the optimization problem, since they can be matched for any trading boundary $l(t), u(t)$. The optimal boundaries are identified as the ones that satisfy the smooth-pasting conditions. These conditions can be seen as limits of the optimality conditions for an impulse control problem with a infinitesimally small cost (\cite{dumas.91}). In practice, they are derived by differentiating \eqref{eq:boundbuy} and \eqref{eq:boundsell} with respect to $z$ at the respective boundaries $z=l$ and $z=u$:
\begin{align}
\label{eq:smoothbuy}
(1+l)v_{zz}(t,l)+\gamma v_z(t,l)=0,\\
(1/(1-\ve)+u)v_{zz}(t,u)+ \gamma v_z(t,u)=0.
\label{smoothsell}
\end{align}
This system defines a \emph{two-dimensional, linear} free-boundary problem in $(t,z)\in [0,T]\times \mathbb R$, which is not tractable in general. \citet{liu2002optimal} obtain a semiexplicit solution with the randomization approach used by \citet{carr1998randomization} to price American options.

With utility maximization from infinite-horizon consumption, the value function depends only on the safe and risky positions $X_t$, $Y_t$ -- the problem is stationary. Calculations are similar, with some minor differences: first, the self-financing condition \eqref{eq:selffin_safe} must include the term $-c_t dt$ in the cash balance, to account for consumption expenditures. Then, the martingale optimality principle takes the following slightly different form. Since utility is not only incurred at maturity but from consumption along the way, not the value function itself but the sum of past consumption $\int_0^t e^{-\delta u} c_u^{1-\gamma}/(1-\gamma) du$ and the value function $e^{-\delta t} V(X_t,Y_t)$, representing future consumption, should be a supermartingale for any policy and a martingale for the optimizer. Here, $\delta$ is the time-preference parameter in \eqref{eq:consumption}. Then, the term $V_t$ in \eqref{eq:itoder} and in turn \eqref{eq:hjb} has to be replaced by $\frac{c_t^{1-\gamma}}{1-\gamma}-c_t V_x- \delta V$. Pointwise maximization yields the optimal consumption rate $c_t=V_x^{-1/\gamma}$. Plugging this expression back into the HJB equation and accounting for homogeneity in wealth, the corresponding free-boundary problem for the reduced value function $v(z)$ then reads as: 
\begin{equation}\label{eq:hjbcons}
\frac{\sigma^2}2 z^2 v_{zz}+
\mu z v_z  +
((1-\gamma)r- \delta) v+
\frac{\gamma}{1-\gamma} ((1-\gamma)v-z v_z)^{1-\frac1\gamma}
=0
\quad \text{if } \quad 1+z<\frac{ (1-\gamma) v(z)}{v_z(z)}<\frac{1}{1-\ve}+z.
\end{equation}
This is the \emph{one-dimensional, nonlinear} free-boundary problem studied by \cite{MR1080472}, who prove a verification theorem, and find a numerical solution. Still, this problem is nontrivial, because the free boundaries points $l, u$ are not easy to identify in terms of the model parameters, and the second order, nonlinear equation \eqref{eq:hjbcons} does not admit a known explicit solution for given initial conditions.

The long-run problem \eqref{eq:longrun} gives the best of both worlds, and more. 
But it requires more audacious heuristics, and nonstandard arguments to be made precise. Puzzlingly enough, this approach was proposed very early in the transaction costs literature by \citet*{MR942619} and \citet*{dumas.luciano.91}, but its potential has not become clear until recently.

We start from equation \eqref{eq:hjb}, derived for a fixed horizon $T$, and note that the value function $V$ should grow exponentially with the horizon. This observation, combined with homogeneity in wealth, leads to guess a solution of the form 
$V(t,X_t,Y_t)= (X_t)^{1-\gamma}v(Y_t/X_t) e^{- (1-\gamma)  (r+\beta) t}$. It is clear that such a guess in general does not solve the finite-horizon problem, as it fails to satisfy its terminal condition. But it is reasonable to expect that it governs the long-run problem, for which the horizon never approaches. With the above guess, the HJB equation reduces to
\begin{equation}\label{eq:hjblong}
\frac{\sigma^2}2 z^2v''(z)+\mu z v'(z)- (1-\gamma) \beta v(z)=0
\qquad \text{if } \qquad 1+z<\frac{ (1-\gamma) v(z)}{v'(z)}<\frac{1}{1-\ve}+z,
\end{equation}
which is a \emph{one-dimensional, linear} free-boundary problem -- the best of both worlds. Note that in this system $\beta$ is not exogenous, but an unknown parameter that determines the growth rate of the value function, and which has to be found along with the free boundaries $l, u$. Indeed, $r+\beta$ is the \emph{equivalent safe rate} which makes a long-term investor indifferent between the original market and this alternative rate alone.

The two crucial advantages of the long-run problem are that the free boundaries $l, u$ have explicit formulas in terms of $\beta$, and that it reduces to solving a Cauchy problem for a \emph{first-order} ordinary differential equation. Depending on the problem at hand, this equation may even have an explicit solution, a fact that is useful although not essential for asymptotics. To derive the free boundary $l$, substitute first 
\eqref{eq:smoothbuy} and then \eqref{eq:boundbuy} into \eqref{eq:hjblong} to obtain 
\begin{align*}
-\frac{\sigma^2}2 (1-\gamma)\gamma\frac{l^2}{(1+l)^2}v +\mu (1-\gamma)\frac{l}{1+l}v -(1-\gamma)\beta v=0.
\end{align*}
Now, observe that $\pi_-=l/(1+l)$ is precisely the risky portfolio weight at the buy boundary, evaluated at the ask price. Factoring out $(1-\gamma) v$, it follows that (cf.  \citet*{dumas.luciano.91})
\begin{align}\label{eq:dumluceq}
-\frac{\gamma \sigma^2}2 \pi_-^2+\mu \pi_- -\beta=0.
\end{align}
Likewise, a similar calculation for $u$ shows that the other root of \eqref{eq:dumluceq} is $\pi_+=u (1-\ve)/(1+u (1-\ve))$, which coincides with the risky portfolio weight at the sell boundary, evaulated at the bid price.

After these calculations, the boundaries $l, u$, or equivalently $\pi_-, \pi_+$, are uniquely identified as solutions of the above equation, once the parameter $\beta$ is found:
$$\pi_{\pm}=\frac{\mu \pm \sqrt{\mu^2-2\gamma\sigma^2\beta}}{\gamma\sigma^2}.$$
 The formulas become even clearer by replacing the parameter $\beta$ with $\lambda=\sqrt{\mu^2-2\gamma\sigma^2\beta}$. With this notation, in which $\lambda=0$ corresponds to the frictionless setting, $\beta=(\mu^2-\lambda^2)/2\gamma\sigma^2$ and the buy and sell boundaries have the intuitive representation 
\begin{align}\label{eq:uplow}
\pi_\pm=\frac{\mu\pm\lambda}{\gamma \sigma^2},
\end{align}
from which $l=\frac{\pi_{-}}{1-\pi_{-}}$ and $u=\frac{1}{1-\varepsilon}\frac{\pi_+}{1-\pi_+}$ are obtained directly. Thus, it remains to find $\lambda$ to identify both the free-boundaries and the equivalent safe rate $r+\beta$. To this end, it is convenient to apply the substitution
\[
  v(z)=e^{(1-\gamma)\int_0^{\log(z/l(\lambda))} w(y)dy}, \quad \mbox{i.e.,} \quad w(y)=\frac{l(\lambda)e^y v'(l(\lambda)e^y)}{(1-\gamma)v(l(\lambda)e^y)}, \]
which reduces the free-boundary problem to a Cauchy problem with a terminal condition:
\begin{align}
w'(y)+(1-\gamma)w(y)^2+\left(\frac{2\mu}{\sigma^2}-1\right)w(y)-
\gamma \left(\frac{\mu-\lambda}{\gamma \sigma^2}\right)
\left(\frac{\mu+\lambda}{\gamma \sigma^2}\right)
 &= 0, \quad y\in[0,\log u(\lambda)/l(\lambda)], \label{riccati1}\\
w(0) &= \frac{\mu-\lambda}{\gamma \sigma^2},\label{riccati2}\\
w(\log (u(\lambda)/l(\lambda))) &= \frac{\mu+\lambda}{\gamma \sigma^2}.\label{riccati3}
\end{align}
In other words, the correct value of $\lambda$ is identifed as the one for which the above first-order Riccati equation satisfies both the initial and the terminal value conditions. For a fixed $\ve$, such a value is the solution of a scalar equation obtained from the explicit solution of the Riccati equation (cf.\ Lemma~\ref{lem:riccati}). For $\ve\sim 0$, the asymptotic expansion of $\lambda(\ve)$ follows from the implicit function theorem -- and some patient calculations (see Lemma \ref{lem:lambda} below). 

Now, one could argue that the advantage of the nonlinear, first-order equation \eqref{riccati1} over the linear, second-order equation \eqref{eq:hjblong} is only marginal. In fact, the variable $w$ has the additional advantage, albeit still hidden at this point, that it coincides with the optimal \emph{shadow} risky portfolio weight (cf.\ Lemma \ref{lem:ergodic}), a fact that is hinted at by its boundary conditions. Furthermore, and as a result, for $\gamma=1$ the equation \eqref{riccati1} recovers the case of logarithmic utility \citep{gerhold2010dual}, while \eqref{eq:hjblong} does not.

\subsection{Explicit formulas}

Let us now show that the reduced value function $w$ and the quantity $\lambda$ are indeed well-defined. To this end, first determine, for a given small $\lambda>0$, an explicit expression for the solution $w$ of the ODE~\eqref{riccati1}, complemented by the initial condition~\eqref{riccati2}.

\begin{lemma}\label{lem:riccati}
Let $0<\mu/\gamma \sigma^2 \neq 1$.
Then for sufficiently small $\lambda>0$, the function
$$
w(\lambda,y)=\begin{cases} 
\frac{a(\lambda)\tanh[\tanh^{-1}(b(\lambda)/a(\lambda))-a(\lambda)y]+(\frac{\mu}{\sigma^2}-\frac{1}{2})}{\gamma-1}, &\mbox{if } \gamma \in (0,1) \mbox{ and } \frac{\mu}{\gamma\sigma^2}<1 \mbox{ or } \gamma>1 \mbox{ and } \frac{\mu}{\gamma\sigma^2}>1,\\
\frac{a(\lambda) \tan[\tan^{-1}(b(\lambda)/a(\lambda))+a(\lambda)y]+(\frac{\mu}{\sigma^2}-\frac{1}{2})}{\gamma-1}, &\mbox{if } \gamma>1 \mbox{ and } \frac{\mu}{\gamma \sigma^2} \in \left(\frac{1}{2}-\frac{1}{2}\sqrt{1-\frac{1}{\gamma}},\frac{1}{2}+\frac{1}{2}\sqrt{1-\frac{1}{\gamma}}\right),\\
\frac{a(\lambda)\coth[\coth^{-1}(b(\lambda)/a(\lambda))-a(\lambda)y]+(\frac{\mu}{\sigma^2}-\frac{1}{2})}{\gamma-1}, &\mbox{otherwise},
\end{cases}
$$  
with
\[
a(\lambda)=\sqrt{\Big|(\gamma-1)\frac{\mu^2-\lambda^2}{\gamma\sigma^4}-\Big(\frac{1}{2}-\frac{\mu}{\sigma^2}\Big)^2\Big|} \quad \mbox{and} \quad b(\lambda)=\frac{1}{2}-\frac{\mu}{\sigma^2}+(\gamma-1)\frac{\mu-\lambda}{\gamma\sigma^2},
\]
is a local solution of 
\begin{equation}\label{eq:wode}
w'(y)+(1-\gamma)w^2(y)+\left(\frac{2\mu}{\sigma^2}-1\right)w(y)-\frac{\mu^2-\lambda^2}{\gamma \sigma^4}=0, \quad w(0)=\frac{\mu-\lambda}{\gamma\sigma^2}.
\end{equation}
Moreover, $y \mapsto w(\lambda,y)$ is increasing (resp.\ decreasing) for $\mu/\gamma\sigma^2 \in (0,1)$ (resp.\ $\mu/\gamma\sigma^2>1$).
\end{lemma}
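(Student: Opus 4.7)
My plan is to verify each candidate formula by direct substitution, reducing the case distinction to a single sign check. Set $m:=\mu/\sigma^2-1/2$ and write each candidate in the unified form $w(\lambda,y)=(f(y)+m)/(\gamma-1)$, where $f$ stands for $a(\lambda)\tanh[\cdots]$, $a(\lambda)\coth[\cdots]$, or $a(\lambda)\tan[\cdots]$ as appropriate. Using $\tanh'=1-\tanh^2$, $\coth'=1-\coth^2$, and $\tan'=1+\tan^2$ together with the chain rule, one obtains $f'(y)=f(y)^2-a(\lambda)^2$ in the two hyperbolic cases and $f'(y)=f(y)^2+a(\lambda)^2$ in the trigonometric case. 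Plugging $w$ and $w'=f'/(\gamma-1)$ into the left-hand side of \eqref{eq:wode}, the terms containing $f$ cancel and one is left with $(m^2\mp a(\lambda)^2)/(\gamma-1)$; this equals $(\mu^2-\lambda^2)/(\gamma\sigma^4)$ exactly when $a(\lambda)^2=m^2-(\gamma-1)(\mu^2-\lambda^2)/(\gamma\sigma^4)$ in the hyperbolic regime, or $a(\lambda)^2=(\gamma-1)(\mu^2-\lambda^2)/(\gamma\sigma^4)-m^2$ in the trigonometric regime, which are precisely the two possibilities produced by the absolute value in the definition of $a(\lambda)$. The initial condition $w(0)=(\mu-\lambda)/(\gamma\sigma^2)$ is immediate from $\tanh(\tanh^{-1}(x))=x$ (respectively for $\coth$ and $\tan$) combined with the definition of $b(\lambda)$.

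Next I would show that the stated parameter regions single out the correct branch. The hyperbolic-vs.-trigonometric split is controlled by the sign of $(\gamma-1)(\mu^2-\lambda^2)/(\gamma\sigma^4)-m^2$; for small $\lambda$ this is positive iff $y^2-\gamma y+\gamma/4<0$ with $y=\mu/\sigma^2$, which (since the discriminant $\gamma^2-\gamma$ must be positive) is equivalent to $\gamma>1$ together with $\mu/(\gamma\sigma^2)\in(\tfrac12-\tfrac12\sqrt{1-1/\gamma},\,\tfrac12+\tfrac12\sqrt{1-1/\gamma})$, exactly the $\tan$ regime. Within the hyperbolic regime, $\tanh$ requires $|b/a|<1$ and $\coth$ requires $|b/a|>1$; a short factorization yields
\[
b(\lambda)^2-a(\lambda)^2=\frac{(\gamma-1)(\mu-\lambda)\big(\gamma\sigma^2-(\mu-\lambda)\big)}{\gamma^2\sigma^4},
\]
so for small $\lambda>0$ the sign of $b^2-a^2$ coincides with that of $(\gamma-1)(1-\mu/(\gamma\sigma^2))$, which reproduces the stated tanh/coth subcases.

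Monotonicity then reads off the explicit forms: in the hyperbolic cases $w'=(f^2-a^2)/(\gamma-1)$ with $f^2\leq a^2$ for tanh (as $|\tanh|\leq 1$) and $f^2\geq a^2$ for coth (as $|\coth|\geq 1$), and in the trigonometric case $w'=(f^2+a^2)/(\gamma-1)$ has constant sign equal to that of $\gamma-1$. Pairing each of these signs with the sign of $\gamma-1$ in the four subcases identified in the previous paragraph gives $w$ increasing exactly when $\mu/(\gamma\sigma^2)<1$ and decreasing when $\mu/(\gamma\sigma^2)>1$, as claimed. The ODE verification and initial-condition check are one-line calculations; the main obstacle is the bookkeeping in the second paragraph, where correctly matching the three analytic branches to the stated parameter regions is the most error-prone step.
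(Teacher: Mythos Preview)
Your proposal is correct and follows essentially the same approach as the paper, which merely states that ``the first part of the assertion is easily verified by taking derivatives'' and ``the second follows by inspection of the explicit formulas.'' You have carried out in full detail precisely this verification-by-differentiation and the case-by-case sign analysis that the paper leaves to the reader; in particular, your factorization $b(\lambda)^2-a(\lambda)^2=(\gamma-1)(\mu-\lambda)\big(\gamma\sigma^2-(\mu-\lambda)\big)/(\gamma^2\sigma^4)$ and the subsequent matching of parameter regions to the $\tanh$/$\coth$/$\tan$ branches are exactly the bookkeeping the paper omits.
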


\begin{proof}
The first part of the assertion is easily verified by taking derivatives. The second follows by inspection of the explicit formulas.
\end{proof}

Next, establish that the crucial constant $\lambda$, which determines both the no-trade region and the equivalent safe rate, is well-defined. For small transaction costs $\varepsilon \sim 0$, its asymptotics are readily computed by means of the implicit function theorem.

\begin{lemma}\label{lem:lambda}
Let $0<\mu/\gamma \sigma^2 \neq 1$ and $w(\lambda,\cdot)$ be defined as in
Lemma~\ref{lem:riccati}, and set
$$
l(\lambda)=\frac{\mu-\lambda}{\gamma\sigma^2-(\mu-\lambda)}, \quad u(\lambda)=\frac{1}{(1-\ve)}\frac{\mu+\lambda}{\gamma\sigma^2-(\mu+\lambda)}.
$$
Then, for sufficiently small $\ve>0$, there exists a unique solution $\lambda$ of 
\begin{equation}\label{eq:wrbd}
  w\left(\lambda,\log\left(\frac{u(\lambda)}{l(\lambda)}\right)\right)
  -\frac{\mu+\lambda}{\gamma\sigma^2}=0.
\end{equation}
As $\ve \downarrow 0$, it has the asymptotics
\[
  \lambda = \gamma\sigma^2\left(\frac{3}{4\gamma}\left(\frac{\mu}{\gamma\sigma^2}\right)^2\left(1-\frac{\mu}{\gamma\sigma^2}\right)^2\right)^{1/3}\ve^{1/3}+O(\ve).
\]
\end{lemma}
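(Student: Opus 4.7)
Set $A := \mu/(\gamma\sigma^2)$ and introduce
\[
F(\lambda, \varepsilon) := w\bigl(\lambda, \log(u(\lambda)/l(\lambda))\bigr) - \frac{\mu+\lambda}{\gamma\sigma^2}.
\]
By Lemma~\ref{lem:riccati}, $F$ is real-analytic on a neighborhood of $(0, 0)$. At $\lambda = \varepsilon = 0$ one has $l(0) = u(0)|_{\varepsilon=0} = A/(1-A)$, so that $\log(u/l) = 0$ and $F(0, 0) = w(0, 0) - A = 0$. Since the naive derivative $\partial_\lambda F(0, 0)$ also vanishes (as will be seen below), the classical implicit function theorem cannot be applied directly; the plan is to resolve the degeneracy by an explicit Taylor expansion followed by a cube-root blow-up.

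The crucial intermediate step is the expansion
\[
F(\lambda, \varepsilon) = A(1-A)\,\varepsilon \;-\; \frac{4\gamma}{3(\gamma\sigma^2)^3 A(1-A)}\,\lambda^3 \;+\; R(\lambda, \varepsilon),
\]
with remainder $R$ of weight at least $4$ in the grading $\deg(\lambda) = 1$, $\deg(\varepsilon) = 3$. To derive it, first expand
\[
\log\bigl(u(\lambda)/l(\lambda)\bigr) = \varepsilon + \frac{2\gamma\sigma^2}{\mu(\gamma\sigma^2-\mu)}\,\lambda + O(\lambda^3 + \varepsilon^2)
\]
directly from the definitions of $u, l$. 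Next, use the initial value $w(\lambda, 0) = (\mu-\lambda)/(\gamma\sigma^2)$ together with the ODE~\eqref{eq:wode} to compute $w_y(\lambda, 0)$, $w_{yy}(\lambda, 0)$, $w_{yyy}(\lambda, 0)$ recursively in closed form. Collecting contributions, the $O(\lambda)$ terms cancel exactly because $w_y(\lambda, 0) = \frac{\mu-\lambda}{\gamma\sigma^2}\bigl(1-\frac{\mu-\lambda}{\gamma\sigma^2}\bigr)$ combines with the slope $\frac{2\gamma\sigma^2}{\mu(\gamma\sigma^2-\mu)}$ of $\log(u/l)$ in $\lambda$ to produce exactly $\frac{2\lambda}{\gamma\sigma^2}$, matching the right-hand side of \eqref{eq:wrbd}. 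The $O(\lambda^2)$ terms also cancel, between the $\lambda$-correction of $w_y(\lambda, 0)$ and the $\frac{1}{2}w_{yy}(0,0)(\alpha_2\lambda)^2$ term, which reflects the symmetric construction of $\pi_\pm$ around $A$ in \eqref{eq:dumluceq}. What remains is the explicit $\lambda^3$ coefficient, computed from $w_{yyy}(0, 0)$ and the $\lambda^3$ correction to $\log(u/l)$.

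With this expansion in hand, substitute $\lambda = \Lambda\,\varepsilon^{1/3}$ and define
\[
G(\Lambda, \delta) := \frac{F(\Lambda\delta,\,\delta^3)}{\delta^3} = A(1-A) - \frac{4\gamma}{3(\gamma\sigma^2)^3 A(1-A)}\Lambda^3 + O(\delta),
\]
which extends real-analytically through $\delta = 0$ since $R$ has weight at least $4$. The equation $G(\Lambda, 0) = 0$ admits the unique positive root
\[
\Lambda_0 = \gamma\sigma^2\left(\frac{3\,A^2(1-A)^2}{4\gamma}\right)^{1/3},
\]
and $\partial_\Lambda G(\Lambda_0, 0) = -3 C_3 \Lambda_0^2 \neq 0$ (with $C_3$ the coefficient displayed above). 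The analytic implicit function theorem then yields a unique smooth $\Lambda(\delta)$ with $\Lambda(0) = \Lambda_0$ solving $G(\Lambda(\delta), \delta) = 0$, so that $\lambda(\varepsilon) = \Lambda(\varepsilon^{1/3})\,\varepsilon^{1/3}$ is the unique solution of \eqref{eq:wrbd} for small $\varepsilon > 0$. To sharpen the remainder from the generic $O(\varepsilon^{2/3})$ to $O(\varepsilon)$, verify that the $O(\delta)$ coefficient of $G$, which comes from the weight-$4$ monomials $\lambda^4$ and $\lambda\varepsilon$ in $F$, vanishes at $\Lambda = \Lambda_0$; one further order in the Taylor expansion of $w$ provides this cancellation, yielding $\Lambda(\delta) = \Lambda_0 + O(\delta^2)$ and hence the stated asymptotic.

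The main obstacle is the bookkeeping of the two cancellations at orders $\lambda$ and $\lambda^2$: both must be verified directly using the ODE for $w$ at $y = 0$ together with the expansion of $\log(u(\lambda)/l(\lambda))$. Once these are established, the cube-root blow-up reduces the problem to a standard application of the implicit function theorem, and the asymptotic constant $\Lambda_0$ matches the one displayed in the lemma.
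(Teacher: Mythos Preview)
Your proof follows the same strategy as the paper's—a cube-root desingularization combined with the implicit function theorem—but your execution is in fact more careful than the paper's own argument. The paper substitutes $\lambda=\delta^{1/3}$ and works with $\hat f(\delta,\eps)=f(\delta^{1/3},\eps)$, which is not genuinely $C^1$ in $\delta$ (the proof even says the implicit function theorem merely ``suggests'' the conclusion); your weighted blow-up $(\lambda,\eps)=(\Lambda\delta,\delta^3)$ with $G=F/\delta^3$ produces an honestly analytic function once the cancellations at orders $\lambda$ and $\lambda^2$ are verified, so the implicit function theorem applies rigorously. You also recover the Taylor coefficients from the Riccati ODE rather than from the explicit trigonometric/hyperbolic formulas in Lemma~\ref{lem:riccati}, and you explicitly isolate the weight-4 cancellation needed for the sharpened $O(\eps)$ remainder, which the paper's proof asserts in the statement but never actually derives. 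In short: same idea, cleaner packaging.
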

\begin{proof}
Write the boundary condition~\eqref{eq:wrbd} as $f(\lambda,\eps)=0$, where:
\begin{equation*}\label{eq:flameps}
f(\lambda,\eps) = w(\lambda, \log(u(\lambda)/l(\lambda)))-\frac{\mu+\lambda}{\gamma \sigma^2}.
\end{equation*}
Of course, $f(0,0)=0$ corresponds to the frictionless case. The implicit function theorem then suggests that for sufficiently small $\varepsilon$ there exists a unique zero $\lambda(\eps)$ with the asymptotics $\lambda(\eps) \sim - \eps f_\eps/f_\lambda$, but the difficulty is that $f_\lambda=0$, because $\lambda$ is not of order $\eps$. Heuristic arguments \citep*{MR1284980,MR2076549} suggest that $\lambda$ is of order $\eps^{1/3}$. Thus, setting $\lambda=\delta^{1/3}$ and $\hat f(\delta,\eps)=f(\delta^{1/3},\eps)$, and computing the derivatives of the explicit formula for $w(\lambda,x)$ (cf.\ Lemma \ref{lem:riccati}) shows that: 
\begin{align*}
\hat f_\eps(0,0) = -\frac{\mu  \left(\mu -\gamma  \sigma ^2\right)}{\gamma ^2 \sigma ^4}, \qquad \hat f_\delta(0,0) = \frac{4}{3 \mu ^2 \sigma ^2-3 \gamma  \mu  \sigma ^4}.
\end{align*}
As a result:
\begin{equation*}
\delta(\eps) \sim -\frac{\hat{f}_\eps(0,0)}{\hat{f}_\delta(0,0)} \eps = 
\frac{3 \mu ^2 \left(\mu -\gamma  \sigma^2\right)^2}{4 \gamma ^2 \sigma ^2}\eps
\quad\text{whence}\quad
\lambda(\eps) \sim
\left(\frac{3\mu ^2 \left(\mu -\gamma  \sigma ^2\right)^2}{4\gamma ^2 \sigma ^2}\right)^{1/3} \eps^{1/3}.
\end{equation*}
\end{proof}

Henceforth, consider small transaction costs $\ve>0$,
and let~$\lambda$ denote the constant in Lemma~\ref{lem:lambda}. Moreover, set $w(y)=w(\lambda, y)$, $a=a(\lambda)$, $b=b(\lambda)$, and $u=u(\lambda)$, $l=l(\lambda)$. In all cases, the function $w$ can be extended smoothly to an open neighborhood of $[0,\log(u/l)]$ (resp.\ $[\log(u/l),0]$ if $\mu/\gamma\sigma^2>1$). By continuity, the ODE \eqref{eq:wode} then also holds at $0$ and $\log(u/l)$; inserting the boundary conditions for $w$ yields the following counterparts for the derivative $w'$:

\begin{lemma}\label{lem:smoothpasting}
Let  $0<\mu/\gamma \sigma^2 \neq 1$. Then, in all three cases,
$$w'(0)=\frac{\mu-\lambda}{\gamma\sigma^2}-\left(\frac{\mu-\lambda}{\gamma\sigma^2}\right)^2, \quad w'\left(\log\left(\frac{u}{l}\right)\right)=\frac{\mu+\lambda}{\gamma\sigma^2}-\left(\frac{\mu+\lambda}{\gamma\sigma^2}\right)^2.$$
\end{lemma}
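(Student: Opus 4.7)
The plan is to read off $w'$ at the two boundary points directly from the Riccati equation (\ref{eq:wode}), using the boundary values for $w$ itself. Since the excerpt has already argued that $w$ extends smoothly to an open neighbourhood of $[0,\log(u/l)]$ (or the reversed interval), the ODE holds up to and including the endpoints, so at $y=0$ and $y=\log(u/l)$ it can be solved algebraically for $w'$ in terms of $w$.

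First I would evaluate the ODE at $y=0$. By definition of $w(\lambda,\cdot)$ in Lemma~\ref{lem:riccati}, $w(0)=\pi_-:=(\mu-\lambda)/(\gamma\sigma^2)$, so (\ref{eq:wode}) gives
\[
w'(0)=(\gamma-1)\pi_-^2-\left(\tfrac{2\mu}{\sigma^2}-1\right)\pi_-+\frac{\mu^2-\lambda^2}{\gamma\sigma^4}.
\]
The key observation is the factorisation $\frac{\mu^2-\lambda^2}{\gamma\sigma^4}=\pi_-\cdot\frac{\mu+\lambda}{\sigma^2}$, which combines with the linear term to produce $\pi_-(\mu+\lambda-2\mu)/\sigma^2=-\pi_-(\mu-\lambda)/\sigma^2=-\gamma\pi_-^2$. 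Plugging this back yields $w'(0)=(\gamma-1)\pi_-^2+\pi_- -\gamma\pi_-^2=\pi_- -\pi_-^2$, which is the first formula.

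Next I would repeat the computation at $y=\log(u/l)$, where by the definition (\ref{eq:wrbd}) of $\lambda$ in Lemma~\ref{lem:lambda} we have $w(\log(u/l))=\pi_+:=(\mu+\lambda)/(\gamma\sigma^2)$. The same manipulation, now factoring $\frac{\mu^2-\lambda^2}{\gamma\sigma^4}=\pi_+\cdot\frac{\mu-\lambda}{\sigma^2}$, produces $\pi_+(\mu-\lambda-2\mu)/\sigma^2=-\gamma\pi_+^2$ and hence $w'(\log(u/l))=\pi_+ -\pi_+^2$, the second formula.

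There is really no obstacle here: the lemma is an algebraic corollary of the Riccati ODE and its two boundary values, and it is deliberately uniform across the three cases of Lemma~\ref{lem:riccati} because the ODE and the boundary data, which are all that enter the computation, are the same in each case. The only minor bookkeeping point is to justify that the ODE holds at the endpoints, which is immediate from the stated smooth extension of $w$ to an open neighbourhood of the closed interval.
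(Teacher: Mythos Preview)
Your proposal is correct and follows exactly the route taken in the paper: the lemma is obtained by plugging the boundary values $w(0)=\pi_-$ and $w(\log(u/l))=\pi_+$ into the Riccati ODE \eqref{eq:wode}, using that the smooth extension of $w$ makes the ODE hold at the endpoints. Your explicit algebraic verification simply spells out what the paper leaves to the reader.
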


\subsection{Discussion}

The above heuristics offer a practical approach to portfolio choice problems with transaction costs, and can be adapted to accommodate additional model features. More importantly, they yield results that are robust to the model specification. In view of \eqref{eq:uplow} and the asymptotics for $\lambda$ in Lemma \ref{lem:lambda}, the no-trade boundaries have the expansion:
\begin{align}\label{eq:notrade}
\pi_{\pm} &= \frac{\mu}{\gamma \sigma^2} \pm 
\left(\frac{3}{4\gamma} 
\left(\frac{\mu}{\gamma \sigma^2}\right)^2\left(1-\frac{\mu}{\gamma \sigma^2}\right)^2\right)^{1/3} \ve^{1/3} +O(\ve).
\end{align}
This expansion coincides with the one obtained by \cite{MR2048827} in the model with consumption. In other words, the long-run and the consumption models yield exactly the same solution at the leading order for small transaction costs. The expansions do differ at the second order, but such differences tend to have a modest effect for typical parameter values.

A major advantage of the long-run objective is the possibility to reduce the solution to a single algebraic equation for the parameter $\lambda$, in terms of which the free boundaries are found explicitly. In principle, one could attempt the same reduction in the consumption problem, by substituting equations \eqref{eq:boundbuy} and \eqref{eq:smoothbuy} into \eqref{eq:hjbred}. The result is a scalar equation for $l$ in terms of $v(l)$, the value of the reduced value function at the trading boundary. Alas, the equation does not have an explicit solution. Also, the value of $v(l)$ is identified as the only one for which the solution to the differential equation (which also has no explicit solution) matches the analogous boundary condition at $u$. The situation is disappointingly more complicated than \eqref{eq:dumluceq}, which immediately identifies both boundaries in terms of a single parameter. In summary, the consumption problem yields a solution which is strikingly similar to the long-run problem, but in a much less tractable setting. Vice versa, the long-run solution provides a  tractable first-order approximation to the consumption problem. 

In the same vein, the long-run optimal portfolio is not far from optimal for utility maximization with terminal wealth. Indeed, \cite{gerhold.al.11} show that the wealth corresponding to the long-run optimal portfolio matches the value function for any \emph{finite} horizon $T$ \citep{bichuch.11} at the leading order $\eps^{2/3}$ for small transaction costs $\varepsilon$ (compare Theorem \ref{th:finhor} below). Hence, finite horizons -- like consumption -- only have a second-order effect on portfolio choice with transaction costs.

To apply the heuristic steps above to more complex problems with transaction costs, it is worth distinguishing the aspects that are special to the specific problem at hand from the ones that are flexible enough to be useful in other models. 
First, in general one cannot expect that a single, simple equation like \eqref{eq:dumluceq} identifies both free boundaries. But the same argument that leads to this equation (the substitution of the boundary and smooth pasting conditions into the HJB equation) will generally lead in a long-run problem to some scalar equation for each boundary, in terms of the equivalent safe rate $\beta$ of the problem. Such equations may be solved explicitly (as in the case of \eqref{eq:dumluceq}) or not, but in the latter case an asymptotic solution will still be available, expanding the scalar equation around the frictionless values of $(\pi,\beta)$. 

Second, the reduced HJB equation may not be autonomous or have an explicit solution, which are two special features of \eqref{riccati1}. If the equation is not autonomous, the Cauchy problem cannot be started at some arbitrary point (zero in the previous example) without a further change of variable. If the free boundaries admit explicit solutions in terms of $\beta$, sometimes a careful choice of notation can lead to a simple expression for at least one boundary, which is a natural choice for the starting point of the Cauchy problem. The correct value of $\beta$ is then identified as the one for which the remaining boundary condition is satisfied. Even if the differential equation has an explicit solution, this condition in general involves a scalar equation that cannot be solved explicitly. Regardless of an explicit formula, asymptotic expansions can be derived by substituting a series expansion for $w$ in the differential equation.

\section{Shadow Prices and Verification}

We justify the heuristic arguments in the previous section by reducing the portfolio choice problem with transaction costs to another portfolio choice problem, without transaction costs. To do so, the bid and ask prices are replaced by a single ``shadow price'' $\tilde{S}_t$ evolving within the bid-ask spread, which yields the same optimal policy and utility. Evidently, \emph{any} frictionless market extension with values in the bid-ask spread leads to more favorable terms of trade than the original market with transaction costs. To achieve equality, the particularly unfavorable shadow price must match the trading prices whenever its optimal policy transacts. The latter is then also feasible and in turn optimal in the original market with transaction costs, motivating the following notion.

\begin{definition}\label{defi:shadow}
A \emph{shadow price} is a frictionless price process $\tilde{S}_t$ evolving within the bid-ask spread $((1-\varepsilon)S_t \leq \tilde{S}_t \leq S_t$ a.s. for all $t$), such that there is an optimal strategy for $\tilde{S}_t$ which is of finite variation and entails buying only when the shadow price $\tilde{S}_t$ equals the ask price $S_t$, and selling only when $\tilde{S}_t$ equals the bid price $(1-\varepsilon)S_t$.
\end{definition}

Once a candidate for such a shadow price is identified, long-run verification results for frictionless models (cf. Guasoni and Robertson (2012)) deliver the optimality of the guessed policy. 

\subsection{Derivation of a Candidate Shadow Price}

With a smooth candidate value function at hand, a candidate shadow price is identified as follows. By definition, trading the shadow price should not allow the investor to outperform the original market with transaction costs. In particular, if $\tilde{S}_t$ is the value of the shadow price at time $t$, then allowing the frictional investor to carry out at single trade at time $t$ at this \emph{frictionless} price should not allow her to increase her utility. A trade of $\nu$ risky shares at the frictionless price $\tilde{S}_t$ moves the investor's safe position $X_t$ to $X_t-\nu \tilde{S}_t$ and her risky position (valued at the ask price $S_t$) from $Y_t$ to $Y_t+\nu S_t$. Then -- recalling that the second and third arguments of the candidate value functions $V$ from the previous section were precisely the investor's safe and risky positions -- the requirement that such a trade does not increase the investor's utility is tantamount to:
$$V(t,X_t-\nu \tilde{S}_t,Y_t+\nu S_t) \leq V(t,X_t,Y_t), \quad \forall \nu \in \mathbb{R}.$$
A Taylor expansion of the left-hand side for small $\nu$ then implies that $-\nu \tilde{S}_t V_x+\nu S_t V_y \leq 0$. Since this inequality has to hold both for positive and negative values of $\nu$, it implies that
\begin{equation}\label{eq:mrs}
\tilde{S}_t=\frac{V_y}{V_x} S_t.
\end{equation}
That is, the multiplicative deviation of the  shadow price from the ask price should be the marginal rate of substitution of risky for safe assets for the optimal frictional investor. In particular, this formula immediately yields a candidate shadow price, once a smooth candidate value function has been identified. For the long-run problem, we derived the following candidate value function in the previous section: 
$$V(t,X_t,Y_t)=e^{-(1-\gamma)(r+\beta)t}(X_t)^{1-\gamma} e^{(1-\gamma)\int_0^{\log(Y_t/lX_t)}w(y)dy}.$$
Using this equality to calculate the partial derivatives in \eqref{eq:mrs}, the candidate shadow price becomes:
\begin{equation}\label{eq:defshadow}
\tilde{S}_t=\frac{w(\Upsilon_t)}{le^{\Upsilon_t}(1-w(\Upsilon_t))}S_t,
\end{equation}
where $\Upsilon_t=\log(X_t/lX^0_t)$ denotes the logarithm of the stock-cash ratio, centered in its value at the lower buying boundary $l$. If this candidate is indeed the right one, then its optimal strategy and value function should coincide with their frictional counterparts derived heuristically above. In particular, the optimal risky fraction $\tilde{\pi}_t$ should correspond to the same numbers $\varphi^0_t$ and $\varphi_t$ of safe and risky shares, but now measured in terms of $\tilde{S}_t$ instead of the ask price $S_t$. As a consequence:
\begin{equation}\label{eq:tildew}
\tilde{\pi}_t=\frac{\varphi_t \tilde{S}_t}{\varphi^0_tS^0_t+\varphi_t\tilde{S}_t}=\frac{\varphi_t S_t \frac{w(\Upsilon_t)}{le^{\Upsilon_t}(1-w(\Upsilon_t))}}{\varphi^0_t S^0_t +\varphi_t S_t \frac{w(\Upsilon_t)}{le^{\Upsilon_t}(1-w(\Upsilon_t))}}=\frac{\frac{w(\Upsilon_t)}{1-w(\Upsilon_t)}}{1+\frac{w(\Upsilon_t)}{1-w(\Upsilon_t)}}=w(\Upsilon_t),
\end{equation}
where, for the third equality, we have used that the optimal frictional stock-cash ratio $\varphi_t S_t/\varphi^0_t S^0_t$ equals $le^{\Upsilon_t}$ by definition of $\Upsilon_t$. We now turn to the corresponding value function $\tilde{V}$. By the definition of shadow price, it should coincide with its frictional counterpart $V$. In the frictionless case, it is more convenient to factor out the total wealth $\tilde{X}_t=\varphi^0_t S^0_t+\varphi_t \tilde{S}_t$ (in terms of the frictionless risky price $\tilde{S}_t$) instead of the safe position $X_t=\varphi^0_t S^0_t$, giving
$$\tilde{V}(t,\tilde{X}_t,\Upsilon_t)=V(t,X_t,Y_t)=e^{-(1-\gamma)(r+\beta)t} \tilde{X}_t^{1-\gamma} \left(\frac{X_t}{\tilde{X}_t}\right)^{1-\gamma} e^{(1-\gamma)\int_0^{\Upsilon_t}w(y)dy}.$$
Since $X_t/\tilde{X}_t=1-w(\Upsilon_t)$ by definition of $\tilde{S}_t$, one can rewrite the last two factors as
\begin{align*}
\left(\frac{X_t}{\tilde{X}_t}\right)^{1-\gamma} e^{(1-\gamma)\int_0^{\Upsilon_t}w(y)dy}&=\exp\left((1-\gamma)\left[\log(1-w(\Upsilon_t))+\int_0^{\Upsilon_t} w(y)dy\right]\right)\\
&=(1-w(0))^{\gamma-1} \exp\left((1-\gamma)\int_0^{\Upsilon_t} \left(w(y)-\frac{w'(y)}{1-w(y)}\right) dy\right).
\end{align*}
Then, setting $\tilde{w}=w-\frac{w'}{1-w}$, the candidate long-run value function for $\tilde{S}$ becomes
$$\tilde{V}(t,\tilde{X}_t,\Upsilon_t)=e^{-(1-\gamma)(r+\beta)t} \tilde{X}_t^{1-\gamma} e^{(1-\gamma)\int_0^{\Upsilon_t}\tilde{w}(y)dy}(1-w(0))^{\gamma-1}.$$
Starting from the candidate value function and optimal policy for $\tilde{S}$, we can now proceed to verify that they are indeed optimal for $\tilde{S}_t$, by adapting the argument from \cite{guasoni.robertson.11}. But before we do that, we have to construct the respective shadow processes.

\subsection{Construction of the Shadow Price}

The above heuristic arguments suggest that the optimal stock-cash ratio $Y_t/X_t=\varphi_t S_t/\varphi^0_t S^0_t$ should take values in the interval $[l,u]$. Hence, $\Upsilon_t=\log(Y_t/lX_t)$ should be $[0,\log(u/l)]$-valued if the lower trading boundary $l$ for the stock-cash ratio $X_t/X^0_t$ is positive. If the investor shorts the safe asset to leverage her risky position, the stock-cash ratio becomes negative. In the frictionless case, and also for small transaction costs, this happens if the Merton proportion $\mu/\gamma\sigma^2$ is bigger than $1$. Then, the trading boundaries $l \leq u$ are both negative, so that the centered log-stock-cash ratio $\Upsilon_t$ should take values in $[\log(u/l),0]$. In both cases, trading should only take place when the stock-cash ratio reaches the boundaries of this region. Hence, the numbers of safe and risky units $\varphi^0_t$ and $\varphi_t$ should remain constant and $\Upsilon_t=\log(\varphi_t/l\varphi^0_t)+\log(S_t/S^0_t)$ should follow a Brownian motion with drift as long as $\Upsilon_t$ moves in $(0,\log(u/l))$ (resp.\ in $(\log(u/l),0)$ if $\mu/\gamma\sigma^2>1$). This motivates to \emph{define} the process $\Upsilon_t$ as reflected Brownian motion:
\begin{equation}\label{eq:defy}
d\Upsilon_t=(\mu-\sigma^2/2)dt+\sigma dW_t+dL_t-dU_t, \quad \Upsilon_0 \in [0,\log(u/l)],
\end{equation}
for continuous, adapted local time processes $L$ and $U$ which are nondecreasing (resp.\ nonincreasing if $\mu/\gamma\sigma^2>1$) and increase (resp.\ decrease if $\mu/\gamma\sigma^2>1$) only on the sets $\{\Upsilon_t=0\}$ and $\{\Upsilon_t=\log(u/l)\}$, respectively. Starting from this process, whose existence is a classical result of \cite{skorokhod.62}, the process $\tilde{S}$ is defined in accordance with \eqref{eq:defshadow}:

\begin{lemma}\label{lem:dynamics}
Let $(\xi^0,\xi) \in \mathbb{R}_+^2$ be the investor's initial endowment in units of the safe and risky asset. Define 
\begin{equation}\label{y0}
y = \begin{cases}
0,& \text{if } l\xi^0 S_0^0 \geq \xi S_0,\\
\log{(u/l)},& \text{if } u\xi^0 S_0^0 \leq \xi S_0,\\
\log{\left[\xi S_0/(\xi^0 S_0^0 l)\right]},& \text{otherwise},
\end{cases}
\end{equation}
and let $\Upsilon$ be defined as in~\eqref{eq:defy}, starting at $\Upsilon_0 = y$. Then, $\tilde S = S \frac{w(\Upsilon)}{l e^{\Upsilon} (1-w(\Upsilon))}$, with $w$ as in Lemma~\ref{lem:riccati}, has the dynamics
\begin{displaymath}
d\tilde S (\Upsilon_t)/\tilde S (\Upsilon_t) = \left(\tilde{\mu}(\Upsilon_t)+r\right)d t+ \tilde{\sigma}(\Upsilon_t)d W_t,
\end{displaymath}
where $\tilde \mu(\cdot)$ and $\tilde \sigma (\cdot)$ are defined as
\begin{displaymath}
\tilde{\mu}(y) = \frac{\sigma^2w'(y)}{w(y)(1-w(y))}\left(\frac{w'(y)}{1-w(y)}-(1-\gamma)w(y)\right), \quad \tilde{\sigma}(y) = \frac{\sigma w'\left(y\right)}{w(y)(1-w(y))}.
\end{displaymath}
Moreover, the process $\tilde S$ takes values within the bid-ask spread $[(1-\varepsilon)S,S]$.
\end{lemma}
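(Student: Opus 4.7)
The plan is to apply It\^o's formula to $\tilde S_t = S_t\,g(\Upsilon_t)$, where $g(y)=w(y)/[le^y(1-w(y))]$, and then to verify the bid-ask inclusion using the endpoint values of $g$ together with a monotonicity argument. A direct computation gives $(\log g)'(y) = w'(y)/[w(y)(1-w(y))] - 1$, and writing $\log\tilde S = \log S + \log g(\Upsilon)$ together with the SDE~\eqref{eq:defy} for $\Upsilon$ yields
\[
  d\log\tilde S_t = (\text{drift})\,dt + \sigma\bigl[1+(\log g)'(\Upsilon_t)\bigr]\,dW_t + (\log g)'(\Upsilon_t)\bigl(dL_t - dU_t\bigr).
\]
The crucial observation is that the local-time terms vanish: $dL_t$ is supported on $\{\Upsilon_t=0\}$ and $dU_t$ on $\{\Upsilon_t=\log(u/l)\}$, while Lemma~\ref{lem:smoothpasting} gives exactly $w'(0)=w(0)(1-w(0))$ and $w'(\log(u/l))=w(\log(u/l))(1-w(\log(u/l)))$, so $(\log g)'$ is zero at both endpoints of the reflection interval.

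Reading off the diffusion coefficient gives $\tilde\sigma(y)=\sigma w'(y)/[w(y)(1-w(y))]$, in agreement with the claim. For the drift of $d\tilde S/\tilde S$, which equals the drift of $d\log\tilde S$ plus $\tilde\sigma^2/2$, I would expand $(\log g)''$ in terms of $w',w''$; the resulting expression contains $w''$, which I would eliminate by differentiating the Riccati ODE~\eqref{eq:wode} to obtain the identity $w'' + (2\mu/\sigma^2 - 1)w' + 2(1-\gamma)w\,w' = 0$. Substituting this identity and simplifying then collapses the drift precisely to the asserted form $\tilde\mu(y) = \tfrac{\sigma^2 w'(y)}{w(y)(1-w(y))}\bigl(\tfrac{w'(y)}{1-w(y)} - (1-\gamma)w(y)\bigr)$.

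For the bid-ask inclusion, I would first evaluate $g$ at the endpoints using the boundary values of $w$ together with the explicit formulas for $l,u$ in Lemma~\ref{lem:lambda}: the identity $l=w(0)/(1-w(0))$ gives $g(0)=1$, while $u=\tfrac{1}{1-\varepsilon}\cdot\tfrac{w(\log(u/l))}{1-w(\log(u/l))}$ gives $g(\log(u/l))=1-\varepsilon$. Since $g'=g\cdot(\log g)'$, it then suffices to show that $g$ is monotone on the reflection interval, which I would read off from the sign of $(\log g)'(y) = w'(y)/[w(y)(1-w(y))] - 1$ combined with the monotonicity of $w$ stated at the end of Lemma~\ref{lem:riccati}, handling the leveraged case $\mu/\gamma\sigma^2>1$ symmetrically. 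The main obstacle is this last step: the It\^o calculation and the drift simplification are essentially mechanical once the differentiated Riccati identity is in hand, but ruling out that $g$ escapes $[1-\varepsilon,1]$ requires a regime-by-regime sign analysis across the three trigonometric/hyperbolic branches of $w$ listed in Lemma~\ref{lem:riccati}.
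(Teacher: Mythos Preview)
Your It\^o computation is exactly the paper's: apply It\^o's formula to $\tilde S = S\,g(\Upsilon)$, eliminate $w''$ via the differentiated Riccati identity, and observe that the local-time contributions vanish because $(\log g)'=w'/[w(1-w)]-1$ is zero at both reflection boundaries by Lemma~\ref{lem:smoothpasting}. The identification of $\tilde\sigma$ and $\tilde\mu$ then goes through as you describe.

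For the bid-ask inclusion, your endpoint evaluations $g(0)=1$ and $g(\log(u/l))=1-\varepsilon$ match the paper, and so does the reduction to monotonicity of $g$. The difference is in how the sign of $g'$ is determined. You anticipate a regime-by-regime analysis across the three explicit branches of $w$, which would work but is laborious. The paper instead substitutes the Riccati ODE~\eqref{eq:wode} itself (not its derivative) into the numerator $w'-w+w^2$ of $g'$, turning it into the pure quadratic
\[
  \gamma\Bigl(w^2 - \tfrac{2\mu}{\gamma\sigma^2}\,w\Bigr) + \tfrac{\mu^2-\lambda^2}{\gamma\sigma^4},
\]
whose roots are exactly $w(0)$ and $w(\log(u/l))$. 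Differentiating this in $y$ gives $2\gamma w'(w-\mu/\gamma\sigma^2)$, and now only the \emph{monotonicity} of $w$ from Lemma~\ref{lem:riccati} is needed: since $w$ moves monotonically from one root through the midpoint $\mu/\gamma\sigma^2$ to the other, the numerator starts at zero, has a single interior extremum, and returns to zero, hence keeps a constant sign. This handles all three hyperbolic/trigonometric regimes uniformly, so the case split you flagged as the main obstacle can be avoided entirely.
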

Note that the first two cases in~\eqref{y0} arise if the initial stock-cash ratio $\xi S_0/(\xi^0 S_0^0)$ lies outside of the interval $[l,u]$. Then, a jump from the initial position $(\varphi_{0^-}^0, \varphi_{0^-}) = (\xi^0,\xi)$ to the nearest boundary value of $[l,u]$ is required. This transfer requires the purchase resp. sale of the risky asset and hence the initial price $\tilde S _0$ is defined to match the buying resp.\ selling price of the risky asset.

\begin{proof}
The dynamics of $\tilde S_t$ result from It\^{o}'s formula, the dynamics of $\Upsilon_t$, and the identity
\begin{equation}\label{w2ableitung}
w''(y) = 2(\gamma-1)w'(y) w(y)- (2\mu/\sigma^2-1) w'(y),
\end{equation}
obtained by differentiating the ODE \eqref{eq:wode} for $w$ with respect to $x$. Therefore it remains to show that $\tilde{S}_t$ indeed takes values in the bid-ask spread $[(1-\ve)S_t,S_t]$. To this end, notice that -- in view of the ODE \eqref{eq:wode} for $w$ -- the derivative of the function $g(y):=w(y)/l e^y (1-w(y))$ is given by
$$g'(y)=\frac{w'(y)-w(y)+w^2(y)}{le^y (1-w(y))^2}=\frac{\gamma(w^2-2\frac{\mu}{\gamma\sigma^2} w)+(\mu^2-\lambda^2)/\gamma\sigma^4}{le^y (1-w(y))^2}.$$
Due to the boundary conditions for $w$, the derivative $g'$ vanishes at $0$ and $\log(u/l)$. Differentiating its numerator gives $2\gamma w'(y)(w(y)-\frac{\mu}{\gamma\sigma^2})$. For $\frac{\mu}{\gamma\sigma^2} \in (0,1)$ (resp.\ $\frac{\mu}{\gamma\sigma^2}>1$), $w$ is increasing from $\frac{\mu-\lambda}{\gamma\sigma^2}<\frac{\mu}{\gamma\sigma^2}$ to $\frac{\mu+\lambda}{\gamma\sigma^2}>\frac{\mu}{\gamma\sigma^2}$ on $[0,\log(u/l)]$ (resp.\ decreasing from $\frac{\mu+\lambda}{\gamma\sigma^2}$ to $\frac{\mu-\lambda}{\gamma\sigma^2}$ on $[\log(u/l),0]$); hence, $w'$ is nonnegative (resp.\ nonpositive). Moreover, $g'$ starts at zero for $y=0$ (resp.\ $\log(u/l)$), then decreases (resp.\ increases), and eventually starts increasing (resp.\ decreasing) again, until it reaches level zero again for $y=\log(u/l)$ (resp.\ $y=0$). In particular, $g'$ is nonpositive (resp.\ nonnegative), so that $g$ is decreasing on $[0,\log(u/l)]$ (resp.\ increasing on $[\log(u/l),0]$ for $\frac{\mu}{\gamma\sigma^2}>1$). Taking into account that $g(0)= 1$ and $g(\log(u/l))=1-\varepsilon$, by the boundary conditions for $w$ and the definition of $u$ and $l$ in Lemma \ref{lem:lambda}, the proof is now complete.
\end{proof}

\subsection{Verification}

The long-run optimality of the candidate risky weight $\tilde{\pi}(\Upsilon_t)=w(\Upsilon_t)$ from \eqref{eq:tildew} in the frictionless market with price process $\tilde{S}_t$ can now be verified by adapting the argument in \citet*{guasoni.robertson.11}. 
The first step is to determine finite-horizon bounds, which provide lower and upper estimates for the maximal expected utility on any horizon $T$, by focusing on the values of the candidate long-run optimal policy and long-run optimal martingale measure.

These bounds are based on the concept of the (long-run) myopic probability, the hypothetical probability measure under which a logarithmic investor would adopt the same policy as the original power investor under the physical probability. The advantage of this probability is to decompose expected \emph{power} utility (and its dual) into a long-run component \emph{times} a transient component.
This decomposition is similar in spirit to the separation of \emph{logarithmic} utility into a long-run component \emph{plus} a transitory component. To see the analogy, consider the logarithmic utility of a portfolio $\pi(\Theta_t)$ traded in a frictionless market with expected excess return $\tilde\mu(\Theta_t)$ and volatility $\tilde\sigma(\Theta_t)$ driven by some state variable $\Theta_t$:
\begin{equation*}
\log \tilde{X}^{\pi}_T = x+\int_0^T \left(\tilmu(\Theta_t) \pi(\Theta_t) -\frac{\tilsigma^2(\Theta_t)}2\pi^2(\Theta_t)\right)dt
+\int_0^T \tilsigma(\Theta_t) \pi(\Theta_t) dW_t.
\end{equation*}
Now, if $\Theta_t$ follows an autonomous diffusion $d\Theta_t = b(\Theta_t) dt + dW_t$, the above stochastic integral can be replaced by applying It\^o's formula to the function $\Pi(y) = \int_0^y \tilsigma(x)\pi(x) dx$:
\begin{equation*}
\Pi(\Theta_T)-\Pi(\Theta_0) = \int_0^T \left(\tilsigma(\Theta_t)\pi(\Theta_t) b(\Theta_t) +\frac12 (\tilsigma\pi)'(\Theta_t) \right)dt + \int_0^T \tilsigma(\Theta_t)\pi(\Theta_t) dW_t .
\end{equation*}
Indeed, solving the second equation for the stochastic integral, and plugging it into the first equation yields:
\begin{equation*}
\log \tilde{X}^\pi_T = x+\int_0^T \left((\tilmu(\Theta_t)-\tilsigma(\Theta_t) b(\Theta_t)) \pi(\Theta_t) -\frac{\tilsigma^2(\Theta_t)}2\pi^2(\Theta_t)- \frac{(\tilsigma\pi)'(\Theta_t)}{2}\right)dt+( \Pi(\Theta_T)-\Pi(\Theta_0)).
\end{equation*}
This decomposes the logarithmic utility into an integral, which represents the  long-run component, and a residual transitory term, which depends only on the initial and terminal values of the state variable. If the function $\Pi$ is integrable with respect to the invariant measure of $\Theta$, the contribution of the transitory component to the equivalent safe rate $\frac{1}{T} E[\log X^\pi_T]$ is negligible for long horizons.

The myopic probability is key to perform a similar decomposition with power utility. Again, denote by $\tilmu$ the risky asset's drift under the original measure, and by $\hat\mu$ its counterpart under the myopic probability; the corresponding volatility $\tilsigma$ of course has to be the same under both equivalent measures. With logarithmic utility, the optimal portfolio is $\hat\pi_t = \hat \mu_t/\tilsigma_t^2$ even if $\hat \mu_t$ and $\tilsigma_t$ are stochastic \citep{merton.71}. As the definition of the myopic probability requires that the corresponding log-optimal portfolio $\hat\pi_t$ coincides with the optimal portfolio $\tilde\pi_t$ for power utility under the original probability, Girsanov's theorem dictates that the measure change from the original to the myopic probability is governed by the stochastic exponential of $\int_0^T(-\frac{\tilde \mu}{\tilde \sigma} + \tilde \sigma \tilde \pi) dW_t$. This measure change shifts the asset's drift by the same amount, times $\tilde\sigma$, thereby yielding a myopic drift of $\tilde{\sigma}^2 \tilde \pi$, which yields the same optimal policy.
Given this guess for the myopic probability, the finite-horizon bounds follow by routine calculations carried out in the proof of the following lemma:

\begin{lemma}\label{lemfinite}
For a fixed time horizon $T>0$, let $\beta = \frac{\mu^2-\lambda^2}{2\gamma\sigma^2}$ and let the function $w$ be defined as in Lemma~\ref{lem:riccati}. Then, for the the shadow payoff $\tilde X_T$  corresponding to the policy $\tilde \pi(\Upsilon_t) = w(\Upsilon_t)$ and the shadow discount factor $\tilde M_T=e^{-rT}\cale(-\int_0^\cdot \frac\tilmu\tilsigma dW_t)_T$ , the following bounds hold true: 
\begin{align}
E [\tilde X _T^{1-\gamma}] &= \tilde X_0 ^{1-\gamma} e^{(1-\gamma)(r+\beta)T}\hat{E}[e^{(1-\gamma)\left(\tilde q (\Upsilon_0)- \tilde q (\Upsilon_T) \right)}],\label{primbound}\\
E\left[\tilde M _T^{1-\frac{1}{\gamma}}\right]^{\gamma} &= e^{(1-\gamma)(r+\beta)T}\hat{E}\left[e^{(\frac{1}{\gamma}-1)\left(\tilde q (\Upsilon_0)- \tilde q (\Upsilon_T) \right)}\right]^{\gamma}, \label{dualbound}
\end{align}
where $\tilde q (y) := \int_0^y (w(z)-\frac{w'(z)}{1-w(z)}) dz$ and $\hat{E} \left[\cdot\right]$ denotes the expectation with respect to the \emph{myopic probability} $\hat{P}$, defined by
\begin{displaymath}
\frac{d \hat{P}}{dP} = \exp\left(\int_0^T \left(-\frac{\tilde \mu(\Upsilon_t)}{\tilde \sigma(\Upsilon_t)} + \tilde \sigma(\Upsilon_t) \tilde \pi(\Upsilon_t)\right) d W_t - \frac{1}{2}\int_{0}^T\left(-\frac{\tilde \mu(\Upsilon_t)}{\tilde \sigma(\Upsilon_t)}+ \tilde \sigma(\Upsilon_t) \tilde \pi(\Upsilon_t)\right)^2 d t\right).
\end{displaymath}
\end{lemma}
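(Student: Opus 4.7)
The plan is to reduce both \eqref{primbound} and \eqref{dualbound} to a single template: write the random variable inside the expectation as the product of the Radon--Nikodym density $d\hat P/dP$ and an exponential whose $dt$-integrand is forced by the Riccati equation \eqref{eq:wode} to be the constant $(1-\gamma)(r+\beta)$, leaving only the telescoping boundary term $\tilde q(\Upsilon_0)-\tilde q(\Upsilon_T)$ as the random piece under $\hat P$. The mechanism relies on three inputs: the shadow SDE of Lemma~\ref{lem:dynamics}, It\^o's formula applied to the potential $\tilde q(\Upsilon_t):=\int_0^{\Upsilon_t}\tilde w(y)\,dy$ with $\tilde w:=w-w'/(1-w)$, and the algebraic identity $\tilde\pi\tilde\sigma=\sigma(w-\tilde w)$, which is immediate from the explicit formula for $\tilde\sigma$.

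For \eqref{primbound}, the SDE for $\tilde X$ with $\tilde\pi=w(\Upsilon_t)$ gives
\[
  \tilde X_T^{1-\gamma}/\tilde X_0^{1-\gamma}=\exp\Bigl((1-\gamma)rT+(1-\gamma)\!\int_0^T\!\!(w\tilde\mu-\tfrac12 w^2\tilde\sigma^2)(\Upsilon_t)\,dt+(1-\gamma)\!\int_0^T\!(w\tilde\sigma)(\Upsilon_t)\,dW_t\Bigr).
\]
A short computation from Lemma~\ref{lem:dynamics} yields $-\tilde\mu/\tilde\sigma+\tilde\sigma\tilde\pi=(1-\gamma)\sigma w$, so $d\hat P/dP=\cale\bigl((1-\gamma)\sigma\int_0^\cdot w(\Upsilon_s)\,dW_s\bigr)_T$. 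Splitting the stochastic integral via $\tilde\pi\tilde\sigma=\sigma(w-\tilde w)$, the piece $(1-\gamma)\sigma\int_0^T w\,dW$ combines with its quadratic compensator to become exactly $\log(d\hat P/dP)$, so it disappears when the $P$-expectation is rewritten as a $\hat P$-expectation.

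The residual $-(1-\gamma)\sigma\int_0^T\tilde w\,dW$ is dealt with by applying It\^o's formula to $\tilde q(\Upsilon_t)$. The local times $L,U$ in \eqref{eq:defy} are supported on $\{\Upsilon=0\}$ and $\{\Upsilon=\log(u/l)\}$, so they appear with coefficients $\tilde w(0)$ and $\tilde w(\log(u/l))$; both vanish because Lemma~\ref{lem:smoothpasting} is equivalent to $w'=w(1-w)$ at either endpoint, i.e.\ $w'/(1-w)=w$. Substituting for $\int_0^T\tilde w\sigma\,dW$ from the It\^o expansion reduces the $P$-expectation to
\[
  \tilde X_0^{1-\gamma}\,\hat E\!\left[\exp\!\Bigl((1-\gamma)rT+(1-\gamma)\!\int_0^T\!\! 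F(\Upsilon_t)\,dt+(1-\gamma)\bigl(\tilde q(\Upsilon_0)-\tilde q(\Upsilon_T)\bigr)\Bigr)\right]
\]
for an explicit $F$ built from $w,w',w''$. Equation \eqref{primbound} then amounts to the algebraic identity $F\equiv\beta=(\mu^2-\lambda^2)/(2\gamma\sigma^2)$, which is where the Riccati equation \eqref{eq:wode} together with its differentiated form $w''=2(\gamma-1)ww'-(2\mu/\sigma^2-1)w'$ cancel every $w,w',w''$-dependent term.

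The dual bound \eqref{dualbound} follows from the same template applied to $\tilde M_T^{1-1/\gamma}$: I expand $\log\tilde M_T=-rT-\int_0^T(\tilde\mu/\tilde\sigma)\,dW-\tfrac12\int_0^T(\tilde\mu/\tilde\sigma)^2\,dt$, raise to the power $1-1/\gamma$, absorb the Brownian integral into the same myopic density (now with a quadratic compensator scaled by $1-1/\gamma$), and repeat the $\tilde q$ It\^o trick. Taking the $\gamma$-th power of the $\hat P$-expectation multiplies the boundary exponent by $\gamma$, producing $(1/\gamma-1)(\tilde q(\Upsilon_0)-\tilde q(\Upsilon_T))$; the drift integrand collapses once more by the Riccati relation, leaving the common factor $e^{(1-\gamma)(r+\beta)T}$. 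The primary obstacle in both \eqref{primbound} and \eqref{dualbound} is the purely algebraic check that $F\equiv\beta$; the secondary but crucial observation is that $\tilde w$ vanishes at both endpoints by Lemma~\ref{lem:smoothpasting}, which is precisely what makes $\tilde q$ the right potential and prevents the reflecting local times from contributing to the It\^o expansion.
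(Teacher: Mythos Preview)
Your proposal is correct and follows essentially the same route as the paper: factor out the myopic density $d\hat P/dP$, observe that the residual stochastic integrand equals $-(1-\gamma)\sigma\tilde w$ (in the dual case, $\frac{1-\gamma}{\gamma}\sigma(\frac{w'}{1-w}-w)$), replace it via It\^o's formula applied to $\tilde q(\Upsilon_t)$ with the local-time terms vanishing by Lemma~\ref{lem:smoothpasting}, and then collapse the remaining $dt$-integrand to the constant $\beta$ using the Riccati ODE~\eqref{eq:wode} and its derivative. The only cosmetic difference is that you split $\tilde\pi\tilde\sigma=\sigma(w-\tilde w)$ before extracting the density, whereas the paper first writes $\tilde X_T^{1-\gamma}=\tilde X_0^{1-\gamma}\frac{d\hat P}{dP}\,e^{(\cdots)}$ and then simplifies the leftover integrands; both routes land on the same residual term. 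One small wording slip: in your dual sketch, the exponent $(\tfrac{1}{\gamma}-1)(\tilde q(\Upsilon_0)-\tilde q(\Upsilon_T))$ already sits \emph{inside} the $\hat P$-expectation (it comes from the factor $\tfrac{1-\gamma}{\gamma}$ in front of the stochastic integral), and the outer power $\gamma$ does not alter it---so the phrase ``multiplies the boundary exponent by $\gamma$'' is misleading, though the final formula you quote is correct.
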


\begin{proof}
First note that $\tilmu, \tilsigma$, and $w$ are functions of $\Upsilon_t$, but the argument is omitted throughout to ease notation. Now, to prove \eqref{primbound}, notice that the frictionless shadow wealth process $\tilde X_t$ with dynamics $\frac{d\tilde{X}_t}{\tilde{X}_t}=w \frac{d\tilde{S}_t}{\tilde{S}_t}+(1-w)\frac{dS^0_t}{S^0_t}$ satisfies:
\begin{equation*}
\tilde X_T^{1-\gamma}=
\tilde{X}_0^{1-\gamma} e^{(1-\gamma)\int_0^T (r+\tilmu w -\frac{\tilsigma^2}{2}w^2) dt
+(1-\gamma)\int_0^T \tilsigma w dW_t}.
\end{equation*}
Hence:
\begin{align*}
\tilde X_T^{1-\gamma} =& \tilde{X}_0^{1-\gamma}\frac{d\hat P}{dP}
e^{\int_0^T ((1-\gamma)(r+\tilmu w -\frac{\tilsigma^2}{2}w^2)
+\frac12(-\frac{\tilmu}{\tilsigma}+\tilsigma w)^2)dt+\int_0^T ((1-\gamma)\tilsigma w-(-\frac{\tilmu}{\tilsigma}+\tilsigma w)) dW_t}.
\end{align*}
Inserting the definitions of $\tilmu$ and $\tilsigma$, the second integrand simplifies to $(1-\gamma)\sigma(\frac{w'}{1-w}-w)$. Similarly, the first integrand reduces to
$(1-\gamma)(r+\frac{\sigma^2}{2}(\frac{w'}{1-w})^2-(1-\gamma)\sigma^2 \frac{w' w}{1-w} +(1-\gamma)\frac{\sigma^2}{2}w^2)$. In summary:
\begin{equation}\label{secrep}
\tilde X_T^{1-\gamma} = \tilde{X}_0^{1-\gamma}\frac{d\hat P}{dP}
e^{(1-\gamma)\int_0^T (r+\frac{\sigma^2}{2}(\frac{w'}{1-w})^2-(1-\gamma)\sigma^2 \frac{w' w}{1-w} +(1-\gamma)\frac{\sigma^2}{2}w^2) dt + (1-\gamma)\int_0^T \sigma(\frac{w'}{1-w}-w) dW_t}.
\end{equation}
The boundary conditions for $w$ and  $w'$ imply $w(0)-\frac{w'(0)}{1-w(0)}=w(\log(u/l))-\frac{w'(\log(u/l))}{1-w(\log(u/l))}=0$; hence, It\^o's formula yields that the local time terms vanish in the dynamics of $\tilde{q}(\Upsilon_t)$:
\begin{equation}\label{eq:subs}
\tilde{q}(\Upsilon_T)-\tilde{q}(\Upsilon_0)=
\int_0^T \left(\mu-\tfrac{\sigma^2}{2}\right) \left(w-\tfrac{w'}{1-w}\right)+\tfrac{\sigma^2}{2}\left(w'-\tfrac{w''(1-w)+w'^2}{(1-w)^2}\right)dt+\int_0^T \sigma \left(w-\tfrac{w'}{1-w}\right)dW_t.  
\end{equation}
Substituting the second derivative $w''$ according to the ODE \eqref{w2ableitung} and using the resulting identity to replace the stochastic integral in \eqref{secrep} yields
\begin{align*}
\tilde X_T^{1-\gamma} =& \tilde{X}_0^{1-\gamma}\frac{d\hat P}{dP}
e^{(1-\gamma)\int_0^T (r+\frac{\sigma^2}{2}w'+(1-\gamma)\frac{\sigma^2}{2}w^2+(\mu-\frac{\sigma^2}{2})w)dt} e^{(1-\gamma)(\tilde{q}(\Upsilon_0)-\tilde{q}(\Upsilon_T))}.
\end{align*}
After inserting the ODE \eqref{eq:wode} for $w$, the first bound thus follows by talking the expectation.

The argument for the second bound is similar. Plugging in the definitions of $\tilmu$ and $\tilsigma$, the shadow discount factor $\tilde{M}_T=e^{-rT}\cale(-\int_0^\cdot \frac\tilmu\tilsigma dW)_T$ and the myopic probability $\hat P$ satisfy:
\begin{align*}
\tilde{M}_T^{1-\frac1\gamma}&=e^{\frac{1-\gamma}{\gamma}\int_0^T \frac{\tilmu}{\tilsigma}dW_t+\frac{1-\gamma}{\gamma}\int_0^T (r+\frac{\tilmu^2}{2\tilsigma^2})dt}\\
&=\frac{d\hat{P}}{dP} e^{\frac{1-\gamma}{\gamma}\int_0^T (\frac{\tilmu}{\tilsigma}-\frac{\gamma}{1-\gamma}(-\frac{\tilmu}{\tilsigma}+\tilsigma w))dW_t+\frac{1-\gamma}{\gamma}\int_0^T(r+\frac{\tilmu^2}{2\tilsigma^2}+\frac{\gamma}{2(1-\gamma)}(-\frac{\tilmu}{\tilsigma}+\tilsigma w)^2)dt}\\
&= \frac{d\hat{P}}{dP} e^{\frac{1-\gamma}{\gamma}\int_0^T \sigma(\frac{w'}{1-w}-w)dW_t+\frac{1-\gamma}{\gamma}\int_0^T(r+\frac{\sigma^2}{2}(\frac{w'}{1-w})^2-(1-\gamma)\sigma^2\frac{w' w}{1-w}+(1-\gamma)\frac{\sigma^2}{2}w^2)dt}.
\end{align*}
Again replace the stochastic integral using \eqref{eq:subs} and the ODE \eqref{w2ableitung}, obtaining
$$ \tilde{M}_T^{1-\frac1\gamma}=\frac{d\hat{P}}{dP} e^{\frac{1-\gamma}{\gamma}\int_0^T (r+\frac{\sigma^2}{2}w'+(1-\gamma)\frac{\sigma^2}{2}w^2+(\mu-\frac{\sigma^2}{2})w)dt}e^{\frac{1-\gamma}{\gamma}(\tilde{q}(\Upsilon_0)-\tilde{q}(\Upsilon_T))}.$$
Inserting the ODE \eqref{eq:wode} for $w$, taking the expectation, and raising it to power $\gamma$, the second bound follows.
\end{proof}

With the finite horizon bounds at hand, it is now straightforward to establish that the policy $\tilde{\pi}(\Upsilon_t)$ is indeed long-run optimal in the frictionless market with price $\tilde{S}_t$.

\begin{lemma}\label{lem:ergodic}
Let $0<\mu/\gamma \sigma^2 \neq 1$ and let $w$ be defined as in Lemma \ref{lem:riccati}.  Then, the risky weight $\tilde{\pi}(\Upsilon_t)=w(\Upsilon_t)$ is long-run optimal with equivalent safe rate $r+\beta$ in the frictionless market with price process $\tilde{S}_t$. The corresponding wealth process (in terms of $\tilde{S}_t$), and the numbers of safe and risky units are given by
\begin{align*}
\tilde{X}_t&=(\xi^0 S^0_0+\xi \tilde{S}_0)\mathcal{E}\left(\int_0^\cdot (r+w(\Upsilon_s) \tilde{\mu}(\Upsilon_s))ds+\int_0^\cdot w(\Upsilon_s)\tilde{\sigma}(\Upsilon_s)dW_s\right)_t, \\
\varphi_{0^-}&=\xi, \quad \varphi_t=w(\Upsilon_t)\tilde{X}_t/\tilde{S}_t \quad \mbox{for }t\geq 0,\\
\varphi^0_{0^-}&=\xi^0, \quad \varphi^0_t=(1-w(\Upsilon_t))\tilde{X}_t/S^0_t \quad \mbox{for }t\geq 0.
\end{align*}
\end{lemma}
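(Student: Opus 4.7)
The plan is to combine the two finite-horizon estimates of Lemma~\ref{lemfinite} with the ergodicity of the reflected diffusion $\Upsilon_t$ on its compact state space and with conjugate duality for power utility in the complete frictionless market driven by $\tilde S_t$. A lower bound on the equivalent safe rate attained by the candidate portfolio comes from the primal bound~\eqref{primbound}, and a matching upper bound valid for every competitor comes from the dual bound~\eqref{dualbound}. The key qualitative observation is that $\tilde q$, $\tilmu$, and $\tilsigma$ are continuous functions of $\Upsilon_t$ and hence bounded (with $\tilsigma$ bounded away from zero) on the compact interval $[0,\log(u/l)]$ (or $[\log(u/l),0]$), so that every non-exponential factor appearing in Lemma~\ref{lemfinite} is $O(1)$ as $T\to\infty$.

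First I would pin down the candidate policy. Inside the no-trade region the shares $\varphi^0_t,\varphi_t$ stay constant, and the reflection dynamics of $\Upsilon_t$ encode the rebalancing trades at the boundaries. A direct computation using the definition of $\tilde S_t$ then shows that the shadow-market risky fraction of these shares equals $w(\Upsilon_t)$, and that the total shadow wealth $\tilde X_t=\varphi^0_t S^0_t+\varphi_t\tilde S_t$ satisfies the self-financing SDE
$$\frac{d\tilde X_t}{\tilde X_t}=(1-w(\Upsilon_t))\frac{dS^0_t}{S^0_t}+w(\Upsilon_t)\frac{d\tilde S_t}{\tilde S_t}=(r+w(\Upsilon_t)\tilmu(\Upsilon_t))\,dt+w(\Upsilon_t)\tilsigma(\Upsilon_t)\,dW_t,$$
i.e., $\tilde X_t$ is precisely the stochastic exponential in the statement. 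Solving $\varphi_t\tilde S_t=w(\Upsilon_t)\tilde X_t$ and $\varphi^0_t S^0_t=(1-w(\Upsilon_t))\tilde X_t$ then gives the claimed closed form for $\varphi_t$ and $\varphi^0_t$.

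Next I would apply~\eqref{primbound} to this candidate and divide by $T$, obtaining
$$\frac{1}{T}\log E\bigl[\tilde X_T^{1-\gamma}\bigr]^{\frac{1}{1-\gamma}}=\frac{\log\tilde X_0}{T}+(r+\beta)+\frac{1}{(1-\gamma)T}\log\hat E\!\left[e^{(1-\gamma)(\tilde q(\Upsilon_0)-\tilde q(\Upsilon_T))}\right].$$
Since $\tilde q$ is bounded, the last term is $O(1/T)$, so the liminf equals $r+\beta$. For the matching upper bound, $\tilde M_T$ is a true martingale (its stochastic logarithm $-\int\tilmu/\tilsigma\,dW$ has bounded integrand on the compact state space, so Novikov applies) and hence the unique equivalent martingale density in the complete diffusion market with price $\tilde S_t$. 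The standard conjugate-duality inequality for power utility then yields
$$E\bigl[(X^\pi_T)^{1-\gamma}\bigr]^{\frac{1}{1-\gamma}}\le\tilde X_0\,E\bigl[\tilde M_T^{1-1/\gamma}\bigr]^{\frac{\gamma}{1-\gamma}}$$
for every admissible $X^\pi$ starting at $\tilde X_0$, and substituting~\eqref{dualbound} together with the same $O(1)$ argument gives $\limsup_{T\to\infty}T^{-1}\log E[(X^\pi_T)^{1-\gamma}]^{1/(1-\gamma)}\le r+\beta$. Combining the two bounds, $\tilde\pi(\Upsilon_t)=w(\Upsilon_t)$ is long-run optimal with equivalent safe rate $r+\beta$.

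The main obstacle is the identification step: making rigorous that the reflected process $\Upsilon_t$ together with the boundary conditions of Lemmas~\ref{lem:riccati} and~\ref{lem:smoothpasting} really does encode a self-financing, finite-variation shadow strategy whose risky fraction equals $w(\Upsilon_t)$ and whose trades line up with the bid and ask prices. This rests on the smooth-pasting identities at $0$ and $\log(u/l)$, which guarantee that $\tilde S_t$ matches the ask (resp.\ bid) price precisely when the local-time term $L_t$ (resp.\ $U_t$) activates, so the reflection trades are consistent with the frictional quotes as required by Definition~\ref{defi:shadow}. Once this is in place, the ergodic-limit computations above are essentially routine consequences of Lemma~\ref{lemfinite}.
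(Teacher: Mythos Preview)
Your proposal is correct and follows essentially the same route as the paper: apply the primal bound~\eqref{primbound} to the candidate, the dual bound~\eqref{dualbound} together with the standard power-utility duality inequality to an arbitrary competitor, and use that $\tilde q$ is bounded on the compact range of $\Upsilon_t$ so the residual terms are $O(1/T)$. The paper's proof is no more than this, invoking \citet*[Lemma~5]{guasoni.robertson.11} for the duality step and noting that the wealth and unit formulas are immediate from the frictionless definitions once the risky weight $w(\Upsilon_t)$ is fixed.

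One remark on scope: your closing paragraph identifies as the ``main obstacle'' the verification that the resulting $(\varphi^0_t,\varphi_t)$ is a finite-variation, frictionally self-financing strategy whose trades align with the bid and ask prices. That is \emph{not} part of Lemma~\ref{lem:ergodic}. This lemma asserts only long-run optimality in the \emph{frictionless} market with price $\tilde S_t$; the formulas for $\varphi_t,\varphi^0_t$ are simply the standard frictionless unit counts for a portfolio holding weight $w(\Upsilon_t)$ in $\tilde S_t$, and no smooth-pasting or local-time analysis is required here. The finite-variation property and the bid/ask alignment are the content of the subsequent Lemma~\ref{lem:strategy}, and the passage back to the transaction-cost market is Proposition~\ref{prop:shadow}. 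So your worry is legitimate for the overall program but misplaced in this particular proof, which is in fact the easy step.
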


\begin{proof}
The formulas for the wealth process and the corresponding numbers of safe and risky units follow directly from the standard frictionless definitions. Now let $\tilde{M}_t$ be the shadow discount factor from Lemma \ref{lemfinite}. Then, standard duality arguments for power utility (cf. Lemma~5 in \citet*{guasoni.robertson.11}) imply that the shadow payoff $\tilde{X}_t^\phi$ corresponding to \emph{any} admissible strategy $\phi_t$  satisfies the inequality
\begin{equation}\label{dualbound1}
\esp{(\tilde X^\phi_T)^{1-\gamma}}^{\frac 1{1-\gamma}}\le 
\esp{\tilde{M}_T^{\frac{\gamma-1}\gamma}}^{\frac\gamma{1-\gamma}} .
\end{equation}
This inequality in turn yields the following upper bound, valid for any admissible strategy $\phi_t$ in the frictionless market with shadow price $\tilde{S}_t$:
\begin{equation}\label{dualbound2}
\liminf_{T \to \infty} \frac{1}{(1-\gamma)T}\log E\left[(\tilde{X}^\phi_T)^{1-\gamma}\right] \le \liminf_{T\rightarrow\infty}
\frac{\gamma}{(1-\gamma)T}\log \esp{\tilde{M}_T^{\frac{\gamma-1}\gamma}}.
\end{equation}
Since the function $\tilde{q}$ is bounded on the compact support of $\Upsilon_t$, the second bound in Lemma \ref{lemfinite} implies that the right-hand side equals $r+\beta$. Likewise, the first bound in the same lemma implies that the shadow payoff $\tilde{X}_t$ (corresponding to the policy $\varphi_t$) attains this upper bound, concluding the proof.
\end{proof}

The next Lemma establishes that the candidate $\tilde{S}_t$ is indeed a shadow price. 

\begin{lemma}\label{lem:strategy}
Let $0<\mu/\gamma \sigma^2 \neq 1$. Then, the number of shares $\varphi_t=w(\Upsilon_t)\tilde{X}_t/\tilde{S}_t$ in the portfolio~$\tilde{\pi}(\Upsilon_t)$ in Lemma~\ref{lem:ergodic} has the dynamics  
\begin{equation}\label{eq:philu}
\frac{d\varphi_t}{\varphi_t}=\left(1-\frac{\mu-\lambda}{\gamma\sigma^2}\right)dL_t-\left(1-\frac{\mu+\lambda}{\gamma\sigma^2}\right)dU_t.
\end{equation}
Thus, $\varphi_t$ increases only when $\Upsilon_t=0$, that is, when $\tilde{S}_t$ equals the ask price, and decreases only when $\Upsilon_t=\log(u/l)$, that is, when $\tilde{S}_t$ equals the bid price.
\end{lemma}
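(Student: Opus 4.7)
I plan to compute $d\varphi_t/\varphi_t$ directly by It\^o's formula applied to the explicit expression $\varphi_t=w(\Upsilon_t)\tilde{X}_t/\tilde{S}_t$ from Lemma~\ref{lem:ergodic}. Two observations are useful from the outset: by Lemma~\ref{lem:dynamics}, $\tilde{S}_t$ is a continuous It\^o process with no local-time contribution (the reflection terms of $\Upsilon_t$ are killed by $g'(0)=g'(\log(u/l))=0$, as shown in that proof); and by the explicit stochastic-exponential formula in Lemma~\ref{lem:ergodic}, $\tilde{X}_t$ likewise has no local time. Consequently all local-time terms in $d\varphi_t$ come from $w(\Upsilon_t)$, while the absolutely continuous part of $d\varphi_t/\varphi_t$ should vanish in the interior of $[0,\log(u/l)]$.

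The first main step is to verify this interior cancellation. Applying the multiplicative It\^o rule to $\varphi_t=w(\Upsilon_t)\cdot(\tilde{X}_t/\tilde{S}_t)$ and collecting $dW_t$ terms, the Brownian coefficient is $\sigma w'(\Upsilon_t)/w(\Upsilon_t)+(w(\Upsilon_t)-1)\tilde\sigma(\Upsilon_t)$, which is identically zero because $\tilde\sigma=\sigma w'/[w(1-w)]$ from Lemma~\ref{lem:dynamics} gives $(1-w)\tilde\sigma=\sigma w'/w$. For the $dt$ coefficient, I would substitute the explicit expressions for $\tilde\mu$ and $\tilde\sigma$ from Lemma~\ref{lem:dynamics}, then eliminate $w''$ using the differentiated Riccati identity \eqref{w2ableitung}, and finally simplify using the ODE \eqref{eq:wode} itself. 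After collecting like terms in $w'$, $(w')^2/w$, and $(w')^2/[w^2(1-w)]$, all contributions cancel. This is the main (and only real) computational hurdle: elementary, but demanding patient bookkeeping.

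Once the interior part has been shown to vanish, what remains is the local-time contribution, which equals $(w'(\Upsilon_t)/w(\Upsilon_t))(dL_t-dU_t)$. Since $L$ grows only on $\{\Upsilon_t=0\}$ and $U$ on $\{\Upsilon_t=\log(u/l)\}$, I would replace $w'/w$ by its value at the relevant boundary. Lemma~\ref{lem:smoothpasting} gives $w'(0)=w(0)(1-w(0))$ and $w'(\log(u/l))=w(\log(u/l))(1-w(\log(u/l)))$, so the two coefficients collapse to $w'(0)/w(0)=1-(\mu-\lambda)/(\gamma\sigma^2)$ and $w'(\log(u/l))/w(\log(u/l))=1-(\mu+\lambda)/(\gamma\sigma^2)$, producing \eqref{eq:philu}. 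The identification of the trading signal (buying only at the ask, selling only at the bid) is then immediate from the proof of Lemma~\ref{lem:dynamics}: the function $g(y)=w(y)/[le^y(1-w(y))]$ equals $1$ at $y=0$ and $1-\varepsilon$ at $y=\log(u/l)$, so $\tilde{S}_t=S_t$ exactly on $\{\Upsilon_t=0\}$ and $\tilde{S}_t=(1-\varepsilon)S_t$ exactly on $\{\Upsilon_t=\log(u/l)\}$, which are the supports of $dL_t$ and $dU_t$ respectively.
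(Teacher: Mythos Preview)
Your proposal is correct and follows essentially the same route as the paper: apply the product (It\^o) rule to $\varphi_t=w(\Upsilon_t)\tilde X_t/\tilde S_t$, use the dynamics of $\Upsilon_t$, $\tilde X_t$, $\tilde S_t$ together with the differentiated Riccati identity~\eqref{w2ableitung} to reduce to $d\varphi_t/\varphi_t=(w'(\Upsilon_t)/w(\Upsilon_t))(dL_t-dU_t)$, and then evaluate $w'/w$ at the two boundaries via Lemma~\ref{lem:smoothpasting}. The paper's write-up is more terse (``integrating by parts twice \ldots and simplifying''), while you spell out separately the cancellation of the $dW_t$ and $dt$ coefficients and the reason why $\tilde S_t$ and $\tilde X_t$ contribute no local-time terms; but the underlying argument is identical.
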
 

\begin{proof}
It\^o's formula and the ODE \eqref{w2ableitung} yield
$$dw(\Upsilon_t)=-(1-\gamma)\sigma^2 w'(\Upsilon_t)w(\Upsilon_t)dt+\sigma w'(\Upsilon_t)dW_t+w'(\Upsilon_t)(dL_t-dU_t).$$
Integrating $\varphi_t=w(\Upsilon_t)\tilde{X}_t/\tilde{S}_t$ by parts twice, inserting the dynamics of $w(\Upsilon_t)$, $\tilde{X}_t$, $\tilde{S}_t$, and simplifying yields:
$$\frac{d\varphi_t}{\varphi_t}=\frac{w'(\Upsilon_t)}{w(\Upsilon_t)}d(L_t-U_t).$$
Since $L_t$ and $U_t$ only increase (resp.\ decrease when $\mu/\gamma\sigma^2>1$) on $\{\Upsilon_t=0\}$ and $\{\Upsilon_t=\log(u/l)\}$, respectively, the assertion now follows from the boundary conditions for $w$ and $w'$.
\end{proof}

The optimal growth rate for any frictionless price within the bid-ask spread must be greater or equal than in the original market with bid-ask process $((1-\ve)S_t,S_t)$, because the investor trades at more favorable prices. For a \emph{shadow price}, there is an optimal strategy that only entails buying (resp.\ selling) stocks when $\tilde{S}_t$ coincides with the ask- resp.\ bid price. Hence, this strategy yields the same payoff when executed at  bid-ask prices, and thus is also optimal in the original model with transaction costs. The corresponding equivalent safe rate must also be the same, since the difference due to the liquidation costs vanishes as the horizon grows in~\eqref{eq:longrun}:
 
\begin{proposition}\label{prop:shadow}
For a sufficiently small spread $\ve$, the strategy $(\varphi^0_t,\varphi_t)$ from Lemma \ref{lem:ergodic} is also long-run optimal in the original market with transaction costs, with the same equivalent safe rate.
 \end{proposition}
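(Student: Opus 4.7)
The plan is to show the two inequalities that pin down the long-run optimal equivalent safe rate in the market with bid-ask prices, using the shadow price $\tilde S_t$ from Lemma~\ref{lem:dynamics} as the bridge. Write $X^{\varphi^0,\varphi}_T$ for the liquidation value of a self-financing frictional strategy $(\varphi^0,\varphi)$, and $\tilde X^{\varphi^0,\varphi}_T$ for the frictionless wealth that the \emph{same} share holdings generate at the shadow price $\tilde S$. The key trivial but crucial observation is that, because $(1-\varepsilon)S_t\le \tilde S_t\le S_t$, any purchase is cheaper and any sale yields more at $\tilde S_t$ than at the bid-ask prices, so for any admissible frictional $(\varphi^0,\varphi)$ one has $X^{\varphi^0,\varphi}_T \le \tilde X^{\varphi^0,\varphi}_T + \varepsilon S_T \varphi^+_T$ after liquidation, and in fact $X^{\varphi^0,\varphi}_T \le \tilde X^{\varphi^0,\varphi}_T$ before the final liquidation step.

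With this in hand, I would prove the upper bound as follows. For any admissible frictional strategy,
\[
\esp{\bigl(X^{\varphi^0,\varphi}_T\bigr)^{1-\gamma}}^{\frac{1}{1-\gamma}} \le \esp{\bigl(\tilde X^{\varphi^0,\varphi}_T\bigr)^{1-\gamma}}^{\frac{1}{1-\gamma}},
\]
(with the inequality reversed in the obvious way if $\gamma>1$, where one instead bounds the liquidation loss directly). Since $\tilde X^{\varphi^0,\varphi}$ is a frictionless self-financing wealth process for $\tilde S$, Lemma~\ref{lem:ergodic} yields
\[
\liminf_{T\to\infty}\frac{1}{T}\log\esp{\bigl(\tilde X^{\varphi^0,\varphi}_T\bigr)^{1-\gamma}}^{\frac{1}{1-\gamma}}\le r+\beta.
\]
For the lower bound, take the specific $(\varphi^0,\varphi)$ from Lemma~\ref{lem:ergodic}. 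By Lemma~\ref{lem:strategy}, $\varphi$ only increases when $\tilde S_t = S_t$ and only decreases when $\tilde S_t = (1-\varepsilon)S_t$, so every trade is executed at a shadow price that coincides with the price the frictional investor actually pays or receives. Consequently $X^{\varphi^0,\varphi}_t = \tilde X^{\varphi^0,\varphi}_t$ for all $t\ge 0$ \emph{before} the terminal liquidation, and the liquidation at time $T$ multiplies the risky position by a factor in $[1-\varepsilon,1]$; in particular $X^{\varphi^0,\varphi}_T \ge (1-\varepsilon)\tilde X^{\varphi^0,\varphi}_T$.

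Raising to the power $1-\gamma$, taking expectations, logarithms, dividing by $T$ and sending $T\to\infty$, the factor $(1-\varepsilon)^{1-\gamma}$ becomes negligible, and Lemma~\ref{lem:ergodic} gives
\[
\liminf_{T\to\infty}\frac{1}{T}\log\esp{\bigl(X^{\varphi^0,\varphi}_T\bigr)^{1-\gamma}}^{\frac{1}{1-\gamma}} \ge r+\beta.
\]
Combined with the upper bound, this proves both that $(\varphi^0,\varphi)$ is long-run optimal in the frictional market and that the equivalent safe rate equals $r+\beta$. The only point that requires care, and that I view as the main (modest) obstacle, is verifying that $(\varphi^0,\varphi)$ is genuinely admissible in the frictional model, i.e.\ that the frictional wealth stays non-negative; but this follows from the identity $X^{\varphi^0,\varphi}_t = \tilde X^{\varphi^0,\varphi}_t$ derived above and the fact that $\tilde X$ is positive by Lemma~\ref{lem:ergodic}. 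One also needs to confirm that admissibility in the two markets can be matched on the upper-bound side, which is standard: a frictional strategy extended by its liquidation value defines a frictionless self-financing strategy for $\tilde S$ whose terminal wealth dominates the frictional terminal wealth, up to the vanishing liquidation factor.
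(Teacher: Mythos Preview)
Your approach is essentially the paper's: compare any frictional strategy to its shadow counterpart for the upper bound, and use Lemma~\ref{lem:strategy} to see that for the specific $(\varphi^0,\varphi)$ the two self-financing conditions coincide, so the strategy is admissible in both markets and the growth rates agree. Two technical points need tightening.

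First, the claim $X^{\varphi^0,\varphi}_t=\tilde X^{\varphi^0,\varphi}_t$ ``before terminal liquidation'' is not quite right as stated: what Lemma~\ref{lem:strategy} gives you is that the \emph{safe position} $\varphi^0_t$ is the same under both self-financing conditions (since every trade occurs at a shadow price equal to the corresponding bid or ask). The scalar wealths $\varphi^0_t S^0_t+\varphi_t(1-\varepsilon)S_t$ and $\varphi^0_t S^0_t+\varphi_t\tilde S_t$ still differ at intermediate times whenever $\tilde S_t\neq(1-\varepsilon)S_t$. This is harmless for the argument, but you should phrase it as ``the strategy is self-financing in both markets'' rather than as an equality of wealth processes.

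Second, the liquidation bound $X^{\varphi^0,\varphi}_T\ge(1-\varepsilon)\tilde X^{\varphi^0,\varphi}_T$ fails in the leveraged case $\mu/\gamma\sigma^2>1$, where $\varphi^0_T<0$: if $\tilde S_T=S_T$ one gets $X_T-(1-\varepsilon)\tilde X_T=\varepsilon\varphi^0_T S^0_T<0$. The paper instead uses
\[
\varphi^0_T S^0_T+\varphi_T(1-\varepsilon)S_T\ \ge\ \Bigl(1-\tfrac{\varepsilon}{1-\varepsilon}\,\tilde\pi(\Upsilon_T)\Bigr)\bigl(\varphi^0_T S^0_T+\varphi_T\tilde S_T\bigr),
\]
which holds for either sign of $\varphi^0_T$ and, since $\tilde\pi$ is bounded by $(\mu+\lambda)/\gamma\sigma^2$, gives a positive constant factor for small $\varepsilon$. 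This same bound is what yields admissibility (nonnegative liquidation value) in the leveraged regime, so your admissibility sentence should also invoke it rather than the incorrect wealth equality. With these two fixes your proof matches the paper's.
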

 
 \begin{proof}
As $\varphi_t$ only increases (resp.\ decreases) when $\tilde{S}_t=S_t$ (resp.\ $\tilde{S}_t=(1-\ve)S_t$), the strategy $(\varphi_t^0,\varphi_t)$ is also self-financing for the bid-ask process $((1-\ve)S_t,S_t)$. Since $S_t \geq \tilde{S}_t \geq (1-\ve)S_t$ and the number $\varphi_t$ of risky shares is always positive, it follows that
\begin{equation}\label{eq:bounds}
\varphi^0_t S^0_t+\varphi_t \tilde{S}_t \geq \varphi^0_t S^0_t +\varphi_t^+(1-\ve)S_t-\varphi^-_t S_t  \geq (1-\tfrac{\ve}{1-\ve} \tilde{\pi}(Y_t))(\varphi^0_t S^0_t +\varphi_t \tilde{S}_t). 
\end{equation}
The shadow risky fraction $\tilde{\pi}(\Upsilon_t)=w(\Upsilon_t)$ is bounded from above by $(\mu+\lambda)/\gamma\sigma^2=\mu/\gamma\sigma^2+O(\ve^{1/3})$. For a sufficiently small spread $\ve$, the strategy $(\varphi_t^0,\varphi_t)$ is therefore also admissible for $((1-\ve)S_t,S_t)$. Moreover, \eqref{eq:bounds} then also yields
\begin{equation}\label{eq:optimalrate}
\begin{split}
\liminf_{T \to \infty} \frac{1}{(1-\gamma)T}\log E\left[(\varphi^0_T S^0_T+\varphi_T^+ (1-\ve)S_T -\varphi_T^- S_T)^{1-\gamma}\right]\\
\qquad = \liminf_{T \to \infty} \frac{1}{(1-\gamma)T}\log E\left[(\varphi^0_T S^0_T+\varphi_T\tilde{S}_T)^{1-\gamma}\right],
\end{split}
\end{equation}
that is, $(\varphi_t^0,\varphi_t)$ has the same growth rate, either with $\tilde{S}_t$ or with $[(1-\ve)S_t,S_t]$. 

For any admissible strategy $(\psi_t^0,\psi_t)$ for the bid-ask spread $[(1-\ve)S_t,S_t]$, set $\tilde{\psi}_t^0=\psi^0_{0^-}-\int_0^{t} \tilde{S}_s/S^0_s d\psi_s$. Then, $(\tilde{\psi}_t^0,\psi_t)$ is a self-financing trading strategy for $\tilde{S}_t$ with $\tilde{\psi}_t^0 \geq \psi_t^0$. Together with $\tilde{S}_t \in [(1-\ve)S_t,S_t]$, the long-run optimality of $(\varphi_t^0,\varphi_t)$ for $\tilde{S}_t$, and~\eqref{eq:optimalrate}, it follows that:
\begin{align*}
&\liminf_{T \to \infty} \frac{1}{T}\frac{1}{(1-\gamma)}\log E\left[(\psi^0_T S^0_T+\psi_T^+ (1-\ve)S_T -\psi_T^- S_T)^{1-\gamma}\right]\\
&\qquad \qquad\leq \liminf_{T \to \infty} \frac{1}{T}\frac{1}{(1-\gamma)}\log E\left[(\tilde{\psi}^0_T S^0_T+\psi_T \tilde{S}_T)^{1-\gamma}\right]\\
&\qquad \qquad\leq \liminf_{T \to \infty} \frac{1}{T}\frac{1}{(1-\gamma)}\log E\left[(\varphi^0_T S^0_T+\varphi_T \tilde{S}_T)^{1-\gamma}\right]\\
&\qquad \qquad= \liminf_{T \to \infty} \frac{1}{T}\frac{1}{(1-\gamma)}\log E\left[(\varphi^0_T S^0_T+\varphi_T^+ (1-\ve)S_T -\varphi_T^- S_T)^{1-\gamma}\right].
\end{align*}
Hence $(\varphi_t^0,\varphi_t)$ is also long-run optimal for $((1-\ve)S_t,S_t)$.
\end{proof}

By putting together the above statements we obtain the following main result:

\begin{theorem}\label{th:opt}
For a small spread $\ve>0$, and $0<\mu/\gamma\sigma^2 \neq 1$, the process $\tilde{S}_t$ in  Lemma~\ref{lem:dynamics} is a shadow price. A long-run optimal policy --- both for the frictionless market with price $\tilde{S}_t$ and in the market with bid-ask prices $(1-\ve)S_t,S_t$ --- is to keep the risky weight $\tilde{\pi}_t$ (in terms of $\tilde{S}_t$) in the no-trade region
$$[\pi_-,\pi_+]=\left[\frac{\mu-\lambda}{\gamma\sigma^2},\frac{\mu+\lambda}{\gamma\sigma^2}\right].$$
As $\ve \downarrow 0$, its boundaries have the asymptotics
\begin{align*}
  \pi_{\pm} = \frac{\mu}{\gamma\sigma^2}
    \pm \left(\frac{3}{4\gamma}\left(\frac{\mu}{\gamma\sigma^2}\right)^2\left(1-\frac{\mu}{\gamma\sigma^2}\right)^2\right)^{1/3} \ve^{1/3} +O(\ve).
\end{align*}
The corresponding equivalent safe rate is:
\[
  r+\beta=r+\frac{\mu^2-\lambda^2}{\gamma\sigma^2}=
    r+\frac{\mu^2}{2\gamma\sigma^2}-\frac{\gamma\sigma^2}{2}\left(\frac{3}{4\gamma} \left(\frac{\mu}{\gamma\sigma^2}\right)^2\left(1-\frac{\mu}{\gamma\sigma^2}\right)^2\right)^{2/3} \ve^{2/3}
    +O(\ve^{4/3}).
\]

If $\mu/\gamma\sigma^2=1$, then $\tilde{S}_t=S_t$ is a shadow price, and it is optimal to invest all wealth in the risky asset at time $t=0$, never to trade afterwards. In this case, the equivalent safe rate is the frictionless value $r+\beta=r+\mu^2/2\gamma\sigma^2$.
\end{theorem}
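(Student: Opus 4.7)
The plan is to assemble the statement as a direct corollary of the lemmas already established, treating the case $0<\mu/\gamma\sigma^2\neq 1$ first and then dealing with the degenerate case separately. First I would invoke Lemma~\ref{lem:lambda} to pin down the constant $\lambda=\lambda(\ve)$ that solves the boundary matching equation \eqref{eq:wrbd}, and with it the function $w$ from Lemma~\ref{lem:riccati}; this makes $\tilde S_t$ in Lemma~\ref{lem:dynamics} well-defined and simultaneously establishes (via the inequality $(1-\ve)S_t \le \tilde S_t \le S_t$ shown there) that it lies in the bid-ask spread. Lemma~\ref{lem:ergodic} then asserts that the risky weight $\tilde\pi(\Upsilon_t)=w(\Upsilon_t)$ is long-run optimal in the frictionless market with price $\tilde S_t$, with equivalent safe rate $r+\beta$ for $\beta=(\mu^2-\lambda^2)/2\gamma\sigma^2$, and explicitly describes the associated numbers of shares $(\varphi_t^0,\varphi_t)$.

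Next, I would verify that $\tilde S_t$ satisfies Definition~\ref{defi:shadow}: Lemma~\ref{lem:strategy} shows that $\varphi_t$ is of finite variation with \eqref{eq:philu}, increasing only on $\{\Upsilon_t=0\}$ (where $\tilde S_t=S_t$ by the boundary condition $g(0)=1$ from the proof of Lemma~\ref{lem:dynamics}) and decreasing only on $\{\Upsilon_t=\log(u/l)\}$ (where $\tilde S_t=(1-\ve)S_t$). Hence $\tilde S_t$ is a shadow price, and Proposition~\ref{prop:shadow} then transfers long-run optimality (with the same equivalent safe rate) from the frictionless market with $\tilde S_t$ to the original market with bid-ask prices $((1-\ve)S_t,S_t)$. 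Since the no-trade region for $\tilde\pi(\Upsilon_t)=w(\Upsilon_t)$ is exactly $[w(0),w(\log u/l)]=[(\mu-\lambda)/\gamma\sigma^2,(\mu+\lambda)/\gamma\sigma^2]$ by the boundary conditions \eqref{riccati2}--\eqref{riccati3}, the interval $[\pi_-,\pi_+]$ in the statement follows.

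For the asymptotics, I would simply substitute the expansion $\lambda=\gamma\sigma^2\bigl(\tfrac{3}{4\gamma}(\mu/\gamma\sigma^2)^2(1-\mu/\gamma\sigma^2)^2\bigr)^{1/3}\ve^{1/3}+O(\ve)$ from Lemma~\ref{lem:lambda} into $\pi_\pm=(\mu\pm\lambda)/\gamma\sigma^2$, giving the first expansion directly. For the equivalent safe rate, $\lambda^2=(\gamma\sigma^2)^2\bigl(\tfrac{3}{4\gamma}(\mu/\gamma\sigma^2)^2(1-\mu/\gamma\sigma^2)^2\bigr)^{2/3}\ve^{2/3}+O(\ve^{4/3})$, and plugging into $\beta=\mu^2/2\gamma\sigma^2-\lambda^2/2\gamma\sigma^2$ produces the stated expansion of $r+\beta$.

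The main obstacle will be the degenerate case $\mu/\gamma\sigma^2=1$, which lies outside the hypotheses of Lemmas~\ref{lem:riccati}--\ref{lem:strategy}. Here the frictionless Merton portfolio already commits the entire wealth to the risky asset, so any trade after $t=0$ would only incur transaction costs without changing the risky weight. I would argue directly: purchase at $t=0$ enough of the risky asset to use up all cash (paying the ask), and then never trade; this is self-financing. Taking $\tilde S_t=S_t$ makes the candidate live on the boundary of the spread, and the portfolio $\varphi_t^0\equiv 0$, $\varphi_t\equiv \text{const}$ trivially satisfies the buy-only-when-$\tilde S=S$ requirement (no sales occur). The resulting log-wealth has the frictionless drift $r+\mu-\sigma^2/2$ and variance $\sigma^2$, so the equivalent safe rate is $r+\mu^2/2\gamma\sigma^2=r+\sigma^2/2\gamma$ by direct computation with the long-run functional \eqref{eq:longrun}. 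Optimality against any competitor follows from the fact that the frictionless market with $\tilde S_t=S_t$ is at least as favorable as the market with transaction costs, and that $r+\mu^2/2\gamma\sigma^2$ is the frictionless long-run optimum already achieved by this policy.
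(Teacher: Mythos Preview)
Your proposal is correct and follows essentially the same route as the paper: invoke Lemmas~\ref{lem:lambda}, \ref{lem:riccati}, \ref{lem:dynamics}, \ref{lem:ergodic}, \ref{lem:strategy}, and Proposition~\ref{prop:shadow} in turn, then substitute the expansion of $\lambda$ for the asymptotics, and handle $\mu/\gamma\sigma^2=1$ by a direct argument with $\tilde S_t=S_t$. One minor arithmetic slip: in the degenerate case, with $\mu=\gamma\sigma^2$ one has $\mu^2/2\gamma\sigma^2=\gamma\sigma^2/2$, not $\sigma^2/2\gamma$.
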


\begin{proof}
First let $0<\mu/\gamma\sigma^2 \neq 1$. Optimality of the strategy $(\varphi_t^0,\varphi_t)$ associated to $\tilde{\pi}(\Upsilon_t)$ for $\tilde{S}_t$ has been shown in
Lemma~\ref{lem:ergodic}. The asymptotic expansions are an immediate consequence of their counterpart  for~$\lambda$ (cf.\ Lemma~\ref{lem:lambda}) and Taylor expansion. Next, Lemma~\ref{lem:strategy} shows that $\tilde{S}_t$ is a shadow price process in the sense of Definition~\ref{defi:shadow}. Proposition~\ref{prop:shadow} shows that, for small transaction costs $\ve$, the same policy is also optimal, with the same equivalent safe rate, in the original market with bid-ask prices $(1-\ve)S_t,S_t$.

Consider now the degenerate case $\mu/\gamma\sigma^2=1$. Then the optimal strategy in the frictionless model $\tilde{S}_t=S_t$ transfers all wealth to the risky asset at time $t=0$, never to trade afterwards ($\varphi^0_t=0$ and $\varphi_t=\xi+\xi^0 S^0_0/S_0$ for all $t \geq 0$). Hence it is of finite variation and the number of shares never decreases from the unlevered initial position, and increases only at time $t=0$, where the shadow price coincides with the ask price. Thus, $\tilde{S}_t=S_t$ is a shadow price. The remaining assertions then follow as in Proposition~\ref{prop:shadow} above.
\end{proof}

The trading boundaries in this paper are optimal for a long investment horizon, but are also approximately optimal for finite horizons. The following theorem, which complements the main result, makes this point precise:
\begin{theorem}\label{th:finhor}
\label{lem:asympbounds}
Fix a time horizon $T>0$. Then, the finite-horizon equivalent safe rate of the liquidation value $\Xi^\phi_T=\phi^0_T S^0_T+\phi_T^+ (1-\lambda)S_T-\phi_T^- S_T$ associated to \emph{any} strategy $(\phi^0,\phi)$ satisfies the upper bound
\begin{align}\label{eq:upperfinite}
\frac{1}{T}\log E\left[(\Xi^\phi_T)^{1-\gamma}\right]^{\frac{1}{1-\gamma}}&\leq r+\frac{\mu^2-\lambda^2}{2\gamma\sigma^2}+\frac{1}{T}\log(\phi^0_{0^-}+\phi_{0^-}S_0)+\frac{\mu}{\gamma\sigma^2}\frac{\epsilon}{T}+O(\varepsilon^{4/3}) ,\\
\intertext{and the finite-horizon equivalent safe rate of our long-run optimal strategy $(\varphi^0,\varphi)$ satisfies the lower bound}
\label{eq:lowerfinite}
\frac{1}{T}\log E\left[(\Xi^\varphi_T)^{1-\gamma}\right]^{\frac{1}{1-\gamma}}&\geq 
r+\frac{\mu^2-\lambda^2}{2\gamma\sigma^2}+\frac{1}{T}\log(\varphi^0_{0^-}+\varphi_{0^-}S_0)-
\left(\frac{2\mu}{\gamma\sigma^2}+\frac{\varphi_{0^-}S_0}{\varphi^0_{0^-}+\varphi_{0^-}S_0}\right)\frac{\varepsilon}{T}+O(\varepsilon^{4/3}).
\end{align}
In particular, for the same unlevered initial position ($\phi_{0^-}=\varphi_{0^-}\ge 0, \phi^0_{0^-}=\varphi^0_{0^-}\ge 0$), the equivalent safe rates of $(\phi^0,\phi)$ and of the optimal policy $(\varphi^0,\varphi)$ for horizon $T$ differ by at most 
\begin{equation}\label{eq:lossbound}
\frac{1}{T}
\left(\log E\left[(\Xi^\phi_T)^{1-\gamma}\right]^{\frac{1}{1-\gamma}}-
\log E\left[(\Xi^\varphi_T)^{1-\gamma}\right]^{\frac{1}{1-\gamma}}\right)\le
\left(\frac{3\mu}{\gamma\sigma^2}+1\right)\frac{\varepsilon}T+O(\varepsilon^{4/3}).
\end{equation} 
\end{theorem}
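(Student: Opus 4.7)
For the upper bound~\eqref{eq:upperfinite}, my plan is to apply the frictionless duality~\eqref{dualbound1} (in its sharp form with initial wealth factor, $E[(\tilde X^\phi_T)^{1-\gamma}]^{1/(1-\gamma)}\le\tilde X^\phi_0\,E[\tilde M_T^{(\gamma-1)/\gamma}]^{\gamma/(1-\gamma)}$) to a suitable shadow self-financing majorant of the bid-ask wealth. Fix any admissible bid-ask strategy $(\phi^0,\phi)$. Since $(1-\varepsilon)S_T\le\tilde S_T\le S_T$, one has the pointwise bound $\Xi^\phi_T\le\phi^0_T S^0_T+\phi_T\tilde S_T$. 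Mimicking the construction in the proof of Proposition~\ref{prop:shadow}, I replace $(\phi^0,\phi)$ by its shadow self-financing companion $(\tilde\phi^0,\phi)$ with $\tilde\phi^0_t=\phi^0_{0^-}-\int_0^t(\tilde S_s/S^0_s)\,d\phi_s$, which satisfies $\tilde\phi^0_t\ge\phi^0_t$, so that the corresponding shadow wealth dominates $\Xi^\phi_T$. Combining duality with identity~\eqref{dualbound} from Lemma~\ref{lemfinite} then yields
\[
E[(\Xi^\phi_T)^{1-\gamma}]^{1/(1-\gamma)}\le(\phi^0_{0^-}+\phi_{0^-}\tilde S_0)\,e^{(r+\beta)T}\hat E\!\bigl[e^{(\frac{1}{\gamma}-1)(\tilde q(\Upsilon_0)-\tilde q(\Upsilon_T))}\bigr]^{\gamma/(1-\gamma)}.
\]
Since $\tilde S_0\le S_0$, the initial factor is bounded by $\phi^0_{0^-}+\phi_{0^-}S_0$, and the residual exponential is controlled uniformly in $T$ by $2\sup_{[0,\log(u/l)]}|\tilde q|$; the sharp estimate below shows this supremum equals $\mu\varepsilon/\gamma\sigma^2+O(\varepsilon^{4/3})$, delivering the $\mu\varepsilon/(\gamma\sigma^2 T)$ correction.

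For the lower bound~\eqref{eq:lowerfinite}, I apply~\eqref{primbound} directly to $\tilde X^\varphi_T$ and pass from $\tilde X^\varphi_T$ to $\Xi^\varphi_T$ via inequality~\eqref{eq:bounds} together with the uniform bound $\tilde\pi(\Upsilon_t)=w(\Upsilon_t)\le(\mu+\lambda)/\gamma\sigma^2$. Since $X\mapsto E[X^{1-\gamma}]^{1/(1-\gamma)}$ is monotone in both regimes $\gamma\in(0,1)$ and $\gamma>1$, Taylor-expanding $\log(1-\varepsilon(\mu+\lambda)/((1-\varepsilon)\gamma\sigma^2))$ yields a $-\mu\varepsilon/(\gamma\sigma^2 T)$ liquidation-loss term. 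Next, $\tilde X^\varphi_0\ge\varphi^0_{0^-}+\varphi_{0^-}(1-\varepsilon)S_0$ produces the $-(\varphi_{0^-}S_0/(\varphi^0_{0^-}+\varphi_{0^-}S_0))\varepsilon/T$ initial-wealth term, and the $\tilde q$ residual in~\eqref{primbound} contributes another $-\mu\varepsilon/(\gamma\sigma^2 T)$. These three pieces sum to the claimed correction $-(2\mu/\gamma\sigma^2+\varphi_{0^-}S_0/(\varphi^0_{0^-}+\varphi_{0^-}S_0))\varepsilon/T$. The loss bound~\eqref{eq:lossbound} then follows by subtraction under the common unlevered initial position: the $r+\beta$ and $\log$ terms cancel, leaving at most $(3\mu/\gamma\sigma^2+1)\varepsilon/T+O(\varepsilon^{4/3})$, since the weight $\varphi_{0^-}S_0/(\varphi^0_{0^-}+\varphi_{0^-}S_0)\le1$.

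The main technical obstacle is to upgrade the naive bound $|\tilde q|\le\sup|\tilde w|\cdot\log(u/l)=O(\varepsilon^{2/3})$ to the sharper $\sup_{[0,\log(u/l)]}|\tilde q|=\mu\varepsilon/\gamma\sigma^2+O(\varepsilon^{4/3})$ needed for the $\varepsilon/T$ (rather than $\varepsilon^{2/3}/T$) scaling of the finite-horizon correction. Writing $\tilde q(y)=\int_0^y w(z)\,dz+\log((1-w(y))/(1-w(0)))$ and evaluating at the endpoint $L=\log(u/l)$ using the boundary identity $L=-\log(1-\varepsilon)+\log[w(L)(1-w(0))/(w(0)(1-w(L)))]$ implicit in the definitions of $l,u$ (Lemma~\ref{lem:lambda}), the leading $O(\varepsilon^{1/3})$ contributions in $\lambda$ from the integral of $w$ and from the logarithmic boundary term cancel exactly, leaving only the $\mu\varepsilon/\gamma\sigma^2$ residual from the $-\log(1-\varepsilon)$ factor. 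Verifying that this cancellation is indeed exact to order $\varepsilon^{4/3}$ is the real calculational work of the proof.
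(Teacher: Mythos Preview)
Your approach is essentially the paper's: dominate the frictional payoff by a shadow payoff, apply the duality/primal bounds of Lemma~\ref{lemfinite}, and then control the residual $\tilde q(\Upsilon_0)-\tilde q(\Upsilon_T)$ by the total variation of $\tilde q$ over $[0,\log(u/l)]$, which you correctly identify with $\tilde q(\log(u/l))=\int_0^{\log(u/l)}\tilde w(y)\,dy=\tfrac{\mu}{\gamma\sigma^2}\varepsilon+O(\varepsilon^{4/3})$.

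There is one genuine gap. You pass from ``controlled by $2\sup|\tilde q|$'' to the sharp constant $\tfrac{\mu}{\gamma\sigma^2}\varepsilon$ by silently assuming that $\sup_{[0,\log(u/l)]}|\tilde q|$ is attained at the endpoint $\log(u/l)$, i.e.\ that $\tilde q$ is monotone. This is exactly what the paper proves and you omit: one must show that $\tilde w=w-\tfrac{w'}{1-w}$ has a definite sign on $[0,\log(u/l)]$. The paper's argument is to note that $\tilde w$ vanishes at both endpoints (by the boundary conditions on $w$ and $w'$), and that at any interior zero of $\tilde w$ one has $\tilde w'=2\gamma w(\tfrac{\mu}{\gamma\sigma^2}-w)$ via the ODE~\eqref{w2ableitung}, which forces $\tilde w$ to be single-signed. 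Without this step, your sup could in principle exceed $|\tilde q(\log(u/l))|$ and your $\varepsilon$-constant would be off (and your stray factor of $2$, which you then drop, would not be harmless). The remainder of your computation of $\tilde q(\log(u/l))$ via the identity $\tilde q(y)=\int_0^y w+\log\!\bigl((1-w(y))/(1-w(0))\bigr)$ and the relation $\log(u/l)=-\log(1-\varepsilon)+\log\!\bigl(w(L)(1-w(0))/(w(0)(1-w(L)))\bigr)$ matches the paper's \eqref{eq:expl3}--\eqref{eq:expl4}; the ``cancellation'' you describe is precisely what the paper extracts by explicit integration of the closed form in Lemma~\ref{lem:riccati} followed by Taylor expansion in $\lambda$.
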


This result implies that the horizon, like consumption, only has a second order effect on portfolio choice with transaction costs, because the finite-horizon equivalent safe rate matches, at the leading order $\epsilon^{2/3}$, the equivalent safe rate of the stationary long-run optimal policy, and recovers, in particular, the first-order asymptotics for the finite-horizon value function obtained by \citet[Theorem 4.1]{bichuch.11}. 

\begin{proof}[Proof of Theorem \ref{lem:asympbounds}]
Let $(\phi^0,\phi)$ be any admissible strategy starting from the initial position $(\varphi^0_{0-},\varphi_{0-})$. Then as in the proof of Proposition~\ref{prop:shadow}, we have $\Xi^\phi_T \leq \tilde{X}^\phi_T$ for the corresponding shadow payoff, that is, the terminal value of the wealth process $\tilde{X}^\phi_t=\phi^0_0+\phi_0 \tilde{S}_0+\int_0^t \phi_s d\tilde{S}_s$ corresponding to trading $\phi$ in the frictionless market with price process $\tilde{S}_t$. Hence, Lemma 5 in \citet*{guasoni.robertson.11} and the second bound in Lemma \ref{lemfinite} imply that
\begin{equation}\label{eq:boundup}
\frac{1}{(1-\gamma)T}\log E\left[(\Xi^\phi_T)^{1-\gamma}\right]\leq r+\beta+\frac{1}{T}\log(\varphi^0_{0-}+\varphi_{0-}S_0)
+\frac{\gamma}{(1-\gamma)T}\log\hat{E}\left[e^{(\frac{1}{\gamma}-1)(\tilde{q}(\Upsilon_0)-\tilde{q}(\Upsilon_T))}\right].
\end{equation}
For the strategy $(\varphi^0,\varphi)$ from Lemma~\ref{lem:strategy}, we have $\Xi^\varphi_T \geq (1-\frac{\varepsilon}{1-\varepsilon}\frac{\mu+\lambda}{\gamma\sigma^2})\tilde{X}^\varphi_T$ by the proof of Proposition~\ref{prop:shadow}. Hence the first bound in Lemma \ref{lemfinite} yields
\begin{align}
\frac{1}{(1-\gamma)T}\log E\left[(\Xi^\varphi_T)^{1-\gamma}\right]\geq r+\beta +\frac{1}{T}\log(\varphi^0_{0-}+\varphi_{0-}\tilde{S}_0)
&+\frac{1}{(1-\gamma)T}\log\hat{E}\left[e^{(1-\gamma)(\tilde{q}(\Upsilon_0)-\tilde{q}(\Upsilon_T))}\right] \notag \\
&+\frac{1}{T}\log\left(1-\frac{\varepsilon}{1-\varepsilon}\frac{\mu+\lambda}{\gamma\sigma^2}\right).\label{eq:boundlow}
\end{align}
To determine explicit estimates for these bounds, we first analyze the sign of $\tilde{w}(y)=w-\frac{w'}{1-w}$ and hence the monotonicity of $\tilde{q}(y)=\int_0^y \tilde{w}(z)dz$. Whenever $\tilde{w}=0$, i.e., $w'=w(1-w)$, the derivative of $\tilde{w}$ is 
$$\tilde{w}'=w'-\frac{w''(1-w)+w'^2}{(1-w)^2}=\frac{(1-2\gamma)w' w+\frac{2\mu}{\sigma^2} w'}{1-w}-\left(\frac{w'}{1-w}\right)^2=2\gamma w\left(\frac{\mu}{\gamma\sigma^2}-w\right),$$
where we have used the ODE \eqref{w2ableitung} for the second equality. Since $\tilde{w}$ vanishes at $0$ and $\log(u/l)$ by the boundary conditions for $w$ and $w'$, this shows that the behaviour of $\tilde{w}$ depends on whether the investor's position is leveraged or not. In the absence of leverage, $\mu/\gamma\sigma^2 \in (0,1)$, $\tilde{w}$ is defined on $[0,\log(u/l)]$. It vanishes at the left boundary $0$ and then increases since its derivative is initially positive by the initial condition for $w$. Once the function $w$ has increased to level $\mu/\gamma\sigma^2ß$, the derivative of $\tilde{w}$ starts to become negative; as a result, $\tilde{w}$ begins to decrease until it reaches level zero again at $\log(u/l)$. In particular, $\tilde{w}$ is nonnegative for $\mu/\gamma\sigma^2 \in (0,1)$. 

 In the leverage case $\mu/\gamma\sigma^2>1$, the situation is reversed. Then, $\tilde{w}$ is defined on $[\log(u/l),0]$ and, by the boundary condition for $w$ at $\log(u/l)$, therefore starts to decrease after starting from zero at $\log(u/l)$. Once $w$ has decreased to level $\mu/\gamma\sigma^2$, $\tilde{w}$ starts increasing until it reaches level zero again at $0$. Hence, $\tilde{w}$ is nonpositive for $\mu/\gamma\sigma^2>1$.

Now, consider Case 2 of Lemma \ref{lem:riccati}; the calculations for the other cases follow along the same lines with minor modifications. Then $\mu/\gamma\sigma^2 \in (0,1)$ and $\tilde{q}$ is positive and increasing. Hence, \begin{equation}\label{eq:boundup2}
\frac{\gamma}{(1-\gamma)T}\log\hat{E}\left[e^{(\frac{1}{\gamma}-1)(\tilde{q}(\Upsilon_0)-\tilde{q}(\Upsilon_T))}\right]\leq \frac{1}{T}\int_0^{\log(u/l)}\tilde{w}(y)dy
\end{equation}
and likewise
\begin{equation}\label{eq:boundlow2}
\frac{1}{(1-\gamma)T}\log\hat{E}\left[e^{(1-\gamma)(\tilde{q}(\Upsilon_0)-\tilde{q}(\Upsilon_T))}\right] \geq -\frac{1}{T}\int_0^{\log(u/l)} \tilde{w}(y)dy.
\end{equation}
Since $\tilde{w}(y)=w(y)-w'/(1-w)$, the boundary condions for $w$ imply
\begin{equation}\label{eq:expl3}
\int_0^{\log(u/l)} \tilde{w}(y)dy=\int_0^{\log(u/l)}w(y)dy -\log\left(\frac{\mu-\lambda-\gamma\sigma^2}{\mu+\lambda-\gamma\sigma^2}\right).
\end{equation}
By elementary integration of the explicit formula in Lemma \ref{lem:riccati} and using the boundary conditions from Lemma \ref{lem:smoothpasting} for the evaluation of the result at $0$ resp.\ $\log(u/l)$, the integral of $w$ can also be computed in closed form: 
\begin{align}\label{eq:expl4}
\int_0^{\log(u/l)} w(y)dy
=\tfrac{\frac{\mu}{\sigma^2}-\frac{1}{2}}{\gamma-1}\log\left(\tfrac{1}{1-\varepsilon}\tfrac{(\mu+\lambda)(\mu-\lambda-\gamma\sigma^2)}{(\mu-\lambda)(\mu+\lambda-\gamma\sigma^2)}\right)+\tfrac{1}{2(\gamma-1)}\log\left(\tfrac{(\mu+\lambda)(\mu+\lambda-\gamma\sigma^2)}{(\mu-\lambda)(\mu-\lambda-\gamma\sigma^2)}\right).
\end{align}
As $\epsilon \downarrow 0$, a Taylor expansion and the power series for $\lambda$ then yield
$$\int_0^{\log(u/l)} \tilde{w}(y)dy=\frac{\mu}{\gamma\sigma^2}\varepsilon+O(\varepsilon^{4/3}).$$
Likewise,
$$\log\left(1-\frac{\varepsilon}{1-\varepsilon}\frac{\mu-\lambda}{\gamma\sigma^2}\right)=-\frac{\mu}{\gamma\sigma^2}\varepsilon+O(\varepsilon^{4/3}),$$
as well as
$$
\log(\varphi^0_{0-}+\varphi_{0-}\tilde{S}_0) \geq \log(\varphi^0_{0-}+\varphi_{0-}S_0)- \frac{\varphi_{0-}S_0}{\varphi^0_{0-}+\varphi_{0-}S_0}\varepsilon+O(\varepsilon^2),$$
and the claimed bounds follow from \eqref{eq:boundup} and \eqref{eq:boundup2} resp.\ \eqref{eq:boundlow} and \eqref{eq:boundlow2}.
\end{proof}


\section{Open Problems}

In this section we mention three problems for which, in our view, the above approach holds promise, and the effect of transaction costs is likely to be substantial. Of course, only future research can shed light on this point.

\subsection{Multiple Assets}

In sharp contrast to frictionless models, passing from one to several risky assets is far from trivial with transaction costs. The reason is that, since in the free boundary problem the unknown boundary has one dimension less than the number of risky assets, with one asset it reduces to two points only, but with two assets it already becomes an unknown curve. More importantly, multiple assets introduce novel effects, which defy the one-dimensional intuition, as we now argue.
For example, consider a market with two risky assets with prices $S_t^1$ and $S_t^2$:
\begin{align}
\frac{dS_t^1}{S_t^1} =& \mu_1 dt + \sigma_1 dW^1_t \\
\frac{dS_t^2}{S_t^2} =& \mu_2 dt + \varrho\sigma_2 dW^1_t + \sigma_2 \sqrt{1-\varrho^2} dW^2_t 
\end{align}
where $\mu_1,\sigma_1,\mu_2, \sigma_2>0$, $\varrho \in [-1,1]$, and $W^1, W^2$ are two independent Brownian motions. Even for this simple model with power utility, the solution to the portfolio choice problem is unknown. Some recent papers, e.g., \citet{atkinson2004multi}, \citet{muthuraman2006multidimensional}, \citet{atkinson2006influence}, \citet{law2007correlated}, and \citet{bichuch2011utility}, offer some insights -- and raise a number of questions.

Recall that the frictionless portfolio in the above model is  $\pi = \frac{1}{\gamma} \Sigma^{-1} \mu$, where $\mu=(\mu_1,\mu_2)$ is the vector of excess returns, and $\Sigma$ is the covariance matrix defined as $\Sigma_{11} = \sigma_1^2$, $\Sigma_{12} = \Sigma_{21} = \varrho \sigma_1 \sigma_2$, and $\Sigma_{22} = \sigma_2^2$. In other words:
\[
\pi_1 = \frac{\mu_1-\beta_1 \mu_2}{(1-\varrho^2)\sigma_1^2} 
\qquad
\pi_2 = \frac{\mu_2-\beta_2 \mu_1}{(1-\varrho^2)\sigma_2^2}
\]
where $\beta_i = (\varrho \sigma_1 \sigma_2)/\sigma_i^2$ are the betas of each asset with respect to the other. In particular, for two uncorrelated assets the portfolio separates, in that the optimal weight for each risky asset in the market with all assets equals the optimal weight for the risky asset in a market with that risky asset only. This separation property is intuitive and appealing, and reduces the analysis of frictionless portfolio choice problems with multiple uncorrelated assets to the single asset case. \citet{liu.04} and \citet{guasoni2011long} show that such a separation carries over to transaction cost models with exponential utility.

Surprisingly enough, separation seems to fail with constant relative risk aversion, in that the width of the no-trade region for each asset is affected by the presence of the other, even with zero correlation and logarithmic utility. For example, the heuristics in \citet{law2007correlated} yield the following width for the no-trade region of the first asset, compare their Equation (50): 
\begin{equation}
 H_1= \left(\frac{3\varepsilon}{2 \sigma_1^2}\left[\left(\frac12\mu'\Sigma^{-1}\mu +\sigma_1^2\right)\pi_1^2 -\mu_1 \pi_1^2\right]\right)^{1/3}.
\end{equation}
This quantity clearly depends also on $\mu_2$ and $\sigma_2$ through the total squared Sharpe ratio $\mu'\Sigma^{-1}\mu$, even with zero correlation, and hence differs from the width of the no-trade region with a single risky asset:
\begin{equation}
h_1 = \left(\frac{3\varepsilon}{4} \pi_1^2 (1-\pi_1)^2\right)^{1/3}.
\end{equation}
Further, a simple calculation shows that, if $\varrho=0$, then:
\begin{equation}
H_1^3 - h_1^3 = \frac1{2 \sigma_1^4}\left(\frac{\mu_1 \mu_2}{\sigma_1 \sigma_2}\right)^2.
\end{equation}
In other words, the no-trade region in the larger market is always wider than the no-trade region with one asset, and they coincide only if either asset is useless ($\mu_1=0$ or $\mu_2=0$). In all other cases, the presence of an independent asset increases the no-trade region of the others, presumably because the variation of the position in each asset becomes less important for the overall welfare of the investor than with a single asset.
This observation clearly runs against the common wisdom of fund-separation results for frictionless markets, and has potential implications for intermediation and welfare.

Note that in a frictionless market an investor with power utility is indifferent between trading two uncorrelated assets with Sharpe ratios $\mu_1/\sigma_1$, $\mu_2/\sigma_2$, and a single asset with Sharpe ratio $\sqrt{(\mu_1/\sigma_1)^2+(\mu_2/\sigma_2)^2}$, that is, squared Sharpe ratios and in turn equivalent safe rates add across independent shocks. The above observation suggests that this property  no longer holds with transaction costs, and an important open question is to understand the welfare difference between the two markets.
If the two-asset market is more attractive, then investors benefit from access to individual securities rather than only to a limited number of funds, in contrast to classic fund-separation results. Of course, the question is whether this effect is indeed present and large enough to be relevant.

\subsection{Predictability}

Can future stock returns be predicted with public information? 
And what increase in welfare can one expect from this information? 
Predictably enough, these questions have generated a voluminous literature, which evaluates the statistical significance as well as the in-sample and out-of-sample performance of several predictors that focus either on stock characteristics, such as the dividend-yield and earnings-price ratio, or interest rates, such as the term-spread and the corporate-spread. 

Perhaps less predictably, this voluminous literature remains divided between the weak statistical significance of several models, and the strong economic significance of parameter estimates. On the one hand, the standard errors of the predictability parameters are of the same order of magnitude as the parameter estimates themselves; on the other hand, these estimates -- if valid -- imply a substantial welfare increase. These opposing viewpoints are discussed in \citet{welch2008comprehensive}, who offer a critical view of the empirical literature, and find that most models have poor out-of-sample performance, and \citet{cochrane2008dog}, who argues that the absence of predictability in dividend growth implies the presence of return predictability. 

\citet{kim1996dynamic} introduce a basic model with predictable returns, based on one asset with price $S_t$, and one state variable $\theta_t$:
\begin{align}
dS_t/S_t =& (\mu +\alpha \theta_t) dt + \sigma dW_t,\\
d\theta_t =& -\kappa \theta_t dt + dB_t.
\end{align}
Here $\theta_t$ represents a state variable, like the dividend yield, that helps predict future returns, in that the conditional distribution of $S_T/S_t$ at time $t$ depends on $\theta_t$. The two Brownian motions $W$ and $B$ typically have a substantial negative  correlation $\varrho$. The parameter $\alpha$ controls the predictability of returns, with $\alpha=0$ corresponding to the classical case of IID returns.

In such a market, expected returns change over time, mean-reverting to the average $\mu$. Such variation is detrimental for an investor who adopts the constant policy $\pi = \frac{\mu}{\gamma \sigma^2}$, which is optimal for $\alpha=0$, because time-varying returns increase the dispersion of the final payoff. However, the investor can benefit from \emph{market timing}, that is the ability to adopt an investment policy that depends on the current value of the state variable $\theta_t$. This point is easily seen for logarithmic utility, for which the optimal portfolio is $\pi_t = (\mu +\alpha \theta_t)/\sigma^2$, and the corresponding equivalent safe rate has the simple formula:
\begin{equation}
\lim_{T\rightarrow\infty}\frac1 T\esp{\log X^\pi_T} =
\frac{\mu^2}{2 \sigma^2} + \frac{\alpha^2}{4 \kappa \sigma^2}.
\end{equation}
This expression shows that the investor benefits from stronger signals (larger $\alpha$) and from slower mean reversion of the return rate (smaller $\kappa$), and the same conclusion broadly applies to power utility, even though the formulas become clumsier, as the optimal portfolio includes an intertemporal hedging component that is absent in the logarithmic case. 

The above calculation underlies most estimates of the economic significance of predictability, but obviously ignores transaction costs. This omission may be especially important, as market timing requires active trading, which in turn entails higher costs. In short, while the potential benefit of predicability is clear from the frictionless theory, its potential costs are blissfully ignored, but may be substantial, and a priori may or may not offset benefits.

Remarkably enough, the above model with transaction costs has never been solved, even for logarithmic utility. Intuitively, the solution of this model should lead to a buy curve $\pi_-(\theta)$ and to a sell curve $\pi_+(\theta)$, which describe the no-trade region for each value of the state variable $\theta_t$. Still at an intuitive level, the width of the no-trade region should be wider for values of $\theta$ that are farther from zero, since the portfolio is increasingly likely to return towards the frictionless optimum without trading.

At the technical level, the model includes two state variables: the predictor $\theta_t$, and the current risky weight $\pi_t$. The presence of two state variables in turn implies that the value function satisfies an elliptic linear partial differential equation within the no-trade region, along with the boundary and smooth-pasting conditions at the boundary. The difficulty is to characterize the shape of the no-trade interval $[\pi_-(\theta),\pi_+(\theta)]$, as a function of the state $\theta_t$, along with its implied equivalent safe rate.

Solving such a model can contribute to the predictability debate by clarifying the extent to which the ability to \emph{forecast} future returns can translate into the ability to \emph{deliver} higher returns by trading. When transaction costs are included, it may turn out that potential benefits of market timing are minimal, even if return predictability is statistically significant. 

\subsection{Options Spreads}

Options listed on stock exchanges display much wider bid-ask spreads than their underlying assets. While the spread on a large capitalization stock is typically less than ten basis points, even the most liquid at-the-money options have  spreads of several \emph{percentage} points. To the best of our knowledge, there seems to be no theoretical work that links the bid-ask spread of an asset to the spread of its options.

Of course, in a frictionless, complete market, both spreads are zero, and the option is replicated by a trading strategy in the underlying asset. The problem is that introducing a bid-ask spread for the underlying asset immediately makes the notion of option price ambiguous. Even if the asset follows a geometric Brownian motion, with transaction costs the superreplication price of any call option equals the stock price itself \citep{MR1336872}. Similarly, the subreplication price is zero. Thus, one cannot interpret the bid and ask prices of the option as replication bounds, if the intention is to obtain a realistic spread.

In contrast to the previous two problems, in which the model is clear and the challenges are mathematical, this question poses some conceptual issues at the outset. One possibility is to interpret the bid and ask prices of the options as marginal prices in a partial equilibrium setting. For example, suppose that the bid and ask prices of the asset are exogenous, and follow geometric Brownian motion, with a constant relative bid-ask spread. Suppose also that a representative investor freely trades this asset, and a European option with maturity $T$, as to maximize utility from terminal wealth, either at the same maturity, or at some long horizon.

Since options, unlike stocks, exist in zero net supply, assume that the representative investor's optimal policy is to keep a zero position in the option at all times. In a complete frictionless market, this condition uniquely identifies the option price as the unique arbitrage-free price. With transaction costs, it leaves more flexibility in option price dynamics. Indeed, consider the shadow price corresponding to the utility maximization problem. Since the shadow market is complete, the shadow asset price uniquely identifies a shadow price for the option as the conditional expectation under the risk-neutral probability. For an option of European type with payoff $G(S_T)$, the latter will then be a function $g(t,S_t,Y_t)$ of time, the current stock price, and the current value of the state variable measuring the ratio of risky and safe positions.

Now, suppose that to the original (not shadow) market one adds the option, with a price dynamics equal to the shadow option price, and zero spread. This market is equivalent to the one with the asset only: by contradiction, if some trading strategy delivered a higher utility than the optimum in the asset-only market, the same strategy would also deliver the same or higher utility in the shadow market (by domination), thereby contradicting the definition of a shadow market. Now, the shadow option price depends on the state variable, which is unobservable since market makers cannot see the private positions of market partcipants. However, taking the pointwise maxima $\overline{g}(t,S_t)=\max_{y \in [0,\log(u/l)]} g(t,S_t,y)$ and minima $\underline{g}(t,S_t)=\min_{y \in [0,\log(u/l)]} g(t,S_t,y)$ over \emph{all} values of $Y_t \in [0,\log(u/l)]$, one can obtain observable upper and lower bounds on the option price, which depend on the asset price alone. Such bounds are natural candidates for bid and ask prices of the option, because they are the minimal observable bounds that an option price needs to satisfy if its net demand has to be zero.

The question is whether this construction can predict bid-ask spreads that are consistent with the ones observed in reality, hence much wider than those of the underlying asset.

\bibliographystyle{plainnat}
\bibliography{tractrans}

\end{document}